\documentclass[a4paper, 12pt, journal, onecolumn]{ieeeconf}      

\IEEEoverridecommandlockouts                              

\usepackage{graphicx}          
\usepackage{booktabs}
\usepackage{subcaption}
\usepackage{amsmath} 
\usepackage{amssymb}  
\usepackage{mathtools}
\usepackage{color}
\usepackage{caption}
\usepackage{multirow}
\usepackage[compress]{cite}
\newtheorem{thm}{Theorem}[section]
\newtheorem{prop}[thm]{Proposition}
\newtheorem{lem}[thm]{Lemma}

\newtheorem{defn}[thm]{Definition}
\newtheorem{assum}[thm]{Assumption}
\newtheorem{rem}[thm]{Remark}
\newenvironment{pf*}[1][Proof]{\vspace{1.5ex}
	\it #1  \rm}
{\hfill \footnotesize{$\blacksquare$}\vspace{2ex}}

\title{Strong nonlinear detectability and  moving horizon estimation for nonlinear  systems with unknown inputs}

\author{Yang Guo $^{1}$, Jaime A. Moreno $^{2}$ and Stefan Streif $^{1}$
 \thanks{This work  has been done as part of the project HZwo:StabiGrid supported by the European Social Fund Plus and the Free State of Saxony}
\thanks{$^{1}$The authors are with  Professorship Automatic Control and System Dynamics, Chemnitz University of Technology, 09107 Chemnitz, Germany
        {\tt\small \{yang.guo, stefan.streif\}@etit.tu-chemnitz.de}}%
\thanks{$^{2}$The author is with  Institute of Engineering, Universidad Nacional Aut\'onoma de M\'exico, 04510 Mexico City, Mexico 
        {\tt\small JMorenoP@iingen.unam.mx}}%
}

\begin{document}
\maketitle
\thispagestyle{empty}
\pagestyle{empty}



\begin{abstract} 
This paper considers state estimation for general nonlinear discrete-time systems subject to measurement noise and possibly unbounded unknown inputs. To approach this problem, we first propose the concept of strong nonlinear detectability. This condition is sufficient and necessary for the existence of unknown input state estimators (UISEs), which reconstruct states from noisy sampled measurements and yield bounded estimation error even for unbounded unknown inputs. Based on the proposed detectability notion, a  UISE is designed via a moving horizon estimation strategy  using a full-order model as well as  past and current measurements.  Next,   
we tighten this detectability notion to design a two-stage MHE-based UISE, which is computationally more efficient than the MHE-based UISE using full-order models. 
In a  simulation example with a plant growth process, both variants of MHE-based UISEs are compared with a conventional MHE  to illustrate the merits of the developed methods.    

\end{abstract}

\section{Introduction}
State estimation is a prerequisite for numerous control applications and is particularly challenging for nonlinear systems with  uncertainties and unknown inputs. In addition, complex model parts may be treated as unknown inputs in order to mitigate the difficulties associated with the design and implementation of estimators.
  Common strategies to ensure bounded estimation error under bounded unknown inputs use robust moving horizon estimators (MHE) \cite{alessandri2008moving, muller2017nonlinear, ji2015, knufer2023nonlinear} or robust nonlinear observers \cite{einicke1999robust, besanccon2003high, alessandri2004observer, tran2023arbitrarily}.  Nevertheless,  estimation accuracy can deteriorate considerably  when unknown inputs become large or even unbounded, as estimation errors are influenced by  these unknown inputs. 
  
  For better  estimation performance,  the design of  so-called  unknown input observers (UIOs) has been investigated for decades. A UIO is a dynamical system that effectively decouples the state estimation from the unknown inputs, thereby delivering exact estimation in noise-free cases.       
It has also found   a variety of applications, such as sensor fault detection \cite{zarei2014robust}, and cyber attack detection in networked systems \cite{gallo2020distributed}.   For linear systems, the design of UIO has been intensively studied \cite{hautus1983,tranninger2022, valcher1999, darouach2009complements, disaro2024, ZHU2023, zou2020moving, moreno2023high}. For example, \cite{hautus1983} proposes  sufficient and necessary conditions for the existence of UIO of  continuous- and discrete-time linear time-invariant systems (LTIs). This result is generalized to continuous-time liner time-variant systems in \cite{tranninger2022}.   
The work \cite{disaro2024} proposes a  date-driven design approach for  UIOs.  Simultaneous state estimation and unknown input reconstruction is considered in \cite{ZHU2023} via interval observers. In \cite{zou2020moving}, state estimation under dynamic quantification effects is addressed using MHE strategy. 
   
For a certain class of nonlinear continuous-time systems with bounded unknown inputs, homogeneity (or sliding mode) theory has been widely employed to design  UIOs, cf. \cite{floquet2007super, zhou2014high,   Caamal2023}.   In case of  unbounded unknown inputs,   under stronger  structural  assumptions on the systems,  UIOs  can be constructed using dissipativity, or  geometric methods for a class of continuous-time nonlinear systems \cite{rocha2005, hammouri2010unknown}, or a  projection approach  for a class of discrete-time nonlinear systems \cite{alenezi2020state}. Notably, while there exist a large body of literature on continuous-time UIO design,  the discrete-time counterpart, despite its practical relevance,  has received much less attention, in particular for nonlinear systems.

 In the present paper,  we  consider  state estimation for \emph{general} nonlinear discrete-time systems subject to  bounded measurement noise and possibly unbounded unknown inputs.  
 The  main contributions are threefold:
 \begin{itemize}
 	\item  We  propose a concept of strong nonlinear detectability and provide its equivalent Lyapunov-like characterization.  We further show that this concept  is sufficient and necessary for the existence of  unknown input state estimators (UISEs), which reconstruct states despite unknown inputs, similar to  UIOs,  but admit more general representations than UIOs.    
 	\item We present an analysis and design framework for  UISEs built on a  moving horizon estimation scheme using full-order models. The  MHE-based UISE  guarantees convergent estimation errors for convergent measurement noises even when  unknown inputs are unbounded.    This extends recent developments  of robust MHE from \cite{allan2021robust,   knufer2023nonlinear, schiller2023} to nonlinear systems with unbounded disturbances. Moreover,  unlike conventional MHE schemes that use only past measurements to estimate the current state and hence follow the one-step ahead estimation strategy, our  scheme exploits the current measurement as well, which poses additional challenges for the analysis and design.    
 	\item   By leveraging a tighter condition for strong nonlinear detectability,  we derive a  reduced-order model that enables the design of  a  two-stage MHE-based UISE,  reducing the online computational complexity compared to the MHE-based UISE using full-order models.  
 \end{itemize}
 
 The paper is organized as follows.  The problem setting is given in Section~\ref{sec:setting}. We propose  the concept of strong nonlinear detectability and  MHE-based UISE design in Section~\ref{sec:UISE_full} and  a computationally  efficient MHE-based UISE  in Section~\ref{sec:UISE_red}. In Section~\ref{sec:sim}, the developed MHE-based UISEs  are  implemented on a  simulation example and compared with a conventional  approach.  Conclusions and directions for future research are outlined in Section~\ref{sec:conclusion}.

\textit{Notation:} We use $\mathbb{I}_{[a,b]}$ to denote the set of integers in the interval $[a,b]$. We denote the infinite sequence $(x_i)^\infty_{i=0}$ with $x_i \in \mathbb{R}$  by $\boldsymbol{x}$ and use $\boldsymbol{x}_{[a,b]}$ to represent the finite sequence $(x_i)^{b}_{i=a}$. The notion $\|\boldsymbol{x}\|_{[a,b]}$ is defined to be $\max_{i\in \mathbb{I}_{[a,b]}} \|x_i\|$ with $\|\cdot\|$ representing the Euclidean norm. We use   $\lfloor \cdot \rfloor$ as the floor function defined on $\mathbb{R}$. Let  $\text{col}(x_1, \ldots, x_n )$ denotes the vector $(x^\top_1, \ldots, x^\top_n)^\top$. 
A function $\alpha: [0, \infty) \to [0, \infty)$ is in  class  $\mathcal{K}$  if it is continuous, strictly increasing, and $\alpha(0)=0$. It is in class $\mathcal{K}_\infty$, if, additionally,  it is unbounded. It is in class \textit{generalized} $\mathcal{K}$, termed as $\mathcal{GK}$,  if it is continuous, $\alpha(0)=0$, and satisfies $\alpha(r_1)>\alpha(r_2)$ for $r_1>r_2$ with $\alpha(r_1)>0$ as well as  $\alpha(r_1)=\alpha(r_2)$ for $r_1>r_2$ with  $\alpha(r_1)=0$.  A function $\beta: [0, \infty)\times \mathbb{I}_{[0, \infty)} \to [0, \infty)$ is in class $\mathcal{KL}$ if $\beta(\cdot, k) \in \mathcal{K}$ for a fixed $k \in \mathbb{I}_{[0, \infty)}$,  and $\beta(r, \cdot)$ is non-increasing and satisfies $\lim_{k\to \infty} \beta(r,k)=0$ for a fixed $r \in [0, \infty)$. It is in class \textit{generalized}  $\mathcal{KL}$, termed as $\mathcal{GKL}$,  if $\beta(\cdot, k) \in \mathcal{GK}$ for a fixed $k \in \mathbb{I}_{[0, \infty)}$ and  $\beta(r, \cdot)$ is non-increasing and satisfies $\lim_{k\to T} \beta(r,k)=0$ for some finite $T \in \mathbb{I}_{[0, \infty)}$ and a fixed $r \in [0, \infty)$.   A function $\beta \in \mathcal{KL}$ is summable if there exists a $\alpha \in \mathcal{K}$ such that $\sum^{\infty}_{k=0} \beta(r, k) \leq \alpha(r)$ holds for all $r \in [0, \infty)$.




\section{Problem setup} \label{sec:setting}
We consider the following discrete-time nonlinear system
\begin{subequations}
	\label{eq:sys}
	\begin{align}
		x_{k+1} &= f(x_k,u_k,w_k), \label{eq:sys_1} \\
	y_k &= h(x_k, v_k), \label{eq:sys_2}
	\end{align}
\end{subequations}
with the state $x_k\in \mathbb{X} \subset \mathbb{R}^n$, the  control  input $u_k\in \mathbb{U} \subseteq \mathbb{R}^m$,  the  possibly unbounded unknown input $w_k\in  \mathbb{R}^q$,  the  bounded measurement noise\footnote{In  the  MHE literature, cf. \cite{cai2008input,alessandri2008moving},    the term   ``noise" or ``disturbance" is typically used to refer to   a bounded unknown input.} $v_k \in \mathbb{V} \subset \mathbb{R}^r$,    as well as the measured output  $y_k\in \mathbb{Y} \subset \mathbb{R}^p$.
Throughout this work, we assume that $f$ and $h$ are  continuously differentiable,   and  the domain of trajectories $\mathbb{Z}:= \mathbb{X} \times \mathbb{U} \times   \mathbb{R}^q \times \mathbb{V} \times \mathbb{Y}$ is known  a priori.

Analogous to \cite[Def.~2.2]{allan2021}, we employ the following input-output map to formalize state estimators comprising  observers  as well as estimation algorithms including the full information estimator (FIE) and MHE.

\begin{defn}[State estimator]\label{def:estimator}
	 A state estimator for the system~\eqref{eq:sys} is a sequence of functions, defined as
	   \begin{equation}
	 	\hat{x}_k= \Psi_k \left(\bar{x}_0, \bar{\boldsymbol{v}}_{[0,k]},  \boldsymbol{u}_{[0,k-1]},  \boldsymbol{y}_{[0,k]} 
	 \right)
	   \end{equation}
	 	for all $k \in \mathbb{I}_{[0, \infty)}$, where $\bar{x}_0 \in \mathbb{X}$, $\bar{v}_i \in \mathbb{V}$ are prespecified priors or guesses of the initial state and the measurement noise respectively, $\hat{x}_k$ is the state estimate,   $ u_i \in \mathbb{U} $ and $y_i\in \mathbb{Y} $ are the control input and the measured output of the system~\eqref{eq:sys} respectively.  
\end{defn}

The main goal of this article is to systematically design a state estimator, which processes the property specified by the following incremental ISS notion.  
\begin{defn}[Unknown input state estimator] \label{def:RGAS}
	A state estimator  is called a unknown input state  estimator (UISE),  if there exist $\mathcal{K}_{\infty}$-function $\beta$,    $\mathcal{KL}$-function $ \beta_0$ and summable  $\mathcal{KL}$-function  $ \beta_v$ such that the computed estimate $\hat{x}_k$  satisfies
	\begin{equation}\label{eq:RGES}
	\beta(\|x_k-\hat{x}_k\|) \leq  \beta_0( \|x_0-\bar{x}_0 \|, k ) + \sum_{i=0}^{k} \beta_v (\|v_{k-i}-\bar{v}_{k-i}\| , i ),	
	\end{equation} 
	for all  $k\in \mathbb{I}_{[0, \infty)}$ and every trajectory $(\boldsymbol{x}, \boldsymbol{u}, \boldsymbol{w}, \boldsymbol{v}, \boldsymbol{y}) $ $ \in \mathbb{Z}^{\infty}$ of \eqref{eq:sys}.
\end{defn}

	If the function $\beta_0$  in \eqref{eq:RGES}  is in class $\mathcal{GKL}$ and there is no measurement noise, then the generated estimate converges to the ground truth  within a  finite time interval, similar to the behavior observed in  (non-asymptotic) sliding-mode UIOs.
Departing from the  notion of incremental ISS  used as the definition of  estimator stability  in the MHE literature \cite{ji2015, rawlings2020model,  knufer2020,allan2021}, the state estimation error $x_k-\hat{x}_k$ for a UISE converges to the origin as $k\to \infty$ with vanishing  $v_k-\bar{v}_k$, regardless of  unknown inputs $w_k$.  Moreover, the   bound on the current estimation error depends not only on the past measurement noise but also the current one.    

\section{Unknown input state estimators with full-order models}\label{sec:UISE_full}

\subsection{Strong nonlinear detectability}

In the following definition,  we present a  notion of strong nonlinear detectability,
which serves as a basis for  the design of  UISEs.   

\begin{defn} [Strong nonlinear detectability] \label{def:strong_detect}
The system \eqref{eq:sys} is strongly nonlinearly detectable if there exist  $\alpha \in \mathcal{K}_{\infty}$,  $\alpha_0\in \mathcal{KL}$ and summable $\mathcal{KL}$-functions $ \alpha_v, \alpha_y $  such that  
\begin{equation}
\label{eq:strong_detect}
\begin{aligned}
& \alpha(\|x_k-\widetilde{x}_k\|) \leq  \alpha_0( \|x_0-\widetilde{x}_0 \|, k ) + \sum_{i=0}^{k} \bigl( \alpha_v (\|v_{k-i}-\widetilde{v}_{k-i}\| , i ) + \alpha_y (\|y_{k-i}-\widetilde{y}_{k-i}\| , i ) \bigr), 
\end{aligned}
\end{equation}
for all $k \in \mathbb{I}_{[0, \infty)}$ and any solutions $(\boldsymbol{x},  \boldsymbol{u}, \boldsymbol{w},  \boldsymbol{v}, \boldsymbol{y}), $
$(\widetilde{\boldsymbol{x}},  \boldsymbol{u}, \widetilde{\boldsymbol{w}},  \widetilde{\boldsymbol{v}}, \widetilde{\boldsymbol{y}}) \in \mathbb{Z}^\infty$ of  \eqref{eq:sys}.
\end{defn}

The key difference from  (uniform) incremental input-output-to-state stability (i-IOSS)  \cite[Definition~2.4]{allan2021}, \cite[Definition~4]{knufer2023nonlinear}, which has recently been termed the standard notion of nonlinear  detectability,  is that   the discrepancy between  two unknown input trajectories $\boldsymbol{w}-\widetilde{\boldsymbol{w}}$ is not encompassed within the  notion~\eqref{eq:strong_detect}. As another difference,    the notion in Definition~\ref{def:strong_detect} incorporates not only past noise and  output measurements but also current ones,  analogous to  the non-incremental IOSS in  \cite[Defintion~3.4]{cai2008input}, and hence is more general than the notion formulated solely with past noise and  measurements. This  permits exploiting current measurements for  states reconstructions, which is  crucial for UISE design.


In what follows, we provide an \emph{equivalent} Lyapunov-like characterization of strong nonlinear detectability, facilitating the numerical verification of the proposed concept of detectability. 
\begin{thm}\label{theo:Lyapunov}
	The system~\eqref{eq:sys} is strong  nonlinearly detectable according to Definition~\ref{def:strong_detect}     if and only if there exist  $\alpha_1, \alpha_2 \in \mathcal{K}_{\infty}$, $\sigma_v,   \sigma_y, \in \mathcal{K}$, $\mu \in (0,1)$ and an incremental storage function $V(x, \widetilde{x}): \mathbb{X} \times \mathbb{X} \to \mathbb{R}_{\geq 0}$  such that 
	\begin{subequations} \label{eq:Lyapunov}
		\noindent
		\begin{align}
			&\alpha_1(\|x-\widetilde{x}\|) \leq V(x, \widetilde{x}) \leq \alpha_2(\|x-\widetilde{x}\|) \label{eq:Lyapunov_1} \\
			\begin{split}\label{eq:Lyapunov_2} 
				& V(f(x,u,w), f(\widetilde{x}, u, \widetilde{w}))  \leq \mu V(x, \widetilde{x}) + \sigma_{v}(\|v- \widetilde{v}\|)   + \sigma_{v}(\|v^+- \widetilde{v}^+\|) \\
                & + \sigma_y( \|y-\widetilde{y}\|) + \sigma_{y}(\|y^+-\widetilde{y}^+\|)  
			\end{split}
		\end{align}
	\end{subequations}
	for all $(x,u,w, v, y)$, $(\widetilde{x},u,\widetilde{w}, \widetilde{v}, \widetilde{y})\in \mathbb{Z}$ and $v^+,  \widetilde{v}^+ \in  \mathbb{V} $ with  $y=h(x,v)$, $ \widetilde{y}=h(\widetilde{x},  \widetilde{v})$, $y^+=h(f(x,u,w), v^+) \in \mathbb{Y}$ and $\widetilde{y}^+= h(f(\widetilde{x}, u, \widetilde{w}), \widetilde{v}^+) \in \mathbb{Y}$. 
\end{thm}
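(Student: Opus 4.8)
The plan is to establish the two directions separately. The ``if'' direction is a direct forward iteration of the dissipation inequality \eqref{eq:Lyapunov_2}, whereas the ``only if'' direction is a converse construction of an incremental storage function from the detectability estimate \eqref{eq:strong_detect}, in the spirit of the i-IOSS converse theorems in \cite{cai2008input,allan2021,knufer2023nonlinear}.

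For sufficiency, I would evaluate \eqref{eq:Lyapunov_2} along two arbitrary solutions of \eqref{eq:sys}, choosing the auxiliary variables as $v^+=v_{k+1}$, $\widetilde v^+=\widetilde v_{k+1}$ so that $y^+=y_{k+1}$ and $\widetilde y^+=\widetilde y_{k+1}$. Writing $V_k:=V(x_k,\widetilde x_k)$ this yields the scalar recursion
\[
V_{k+1}\le \mu V_k+\sigma_v(\|v_k-\widetilde v_k\|)+\sigma_v(\|v_{k+1}-\widetilde v_{k+1}\|)+\sigma_y(\|y_k-\widetilde y_k\|)+\sigma_y(\|y_{k+1}-\widetilde y_{k+1}\|).
\]
Telescoping from $0$ to $k$ gives $V_k\le\mu^kV_0+\sum_{j=0}^{k-1}\mu^{k-1-j}c_j$, where $c_j$ collects the four disturbance terms at step $j$. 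Since each measurement and output discrepancy enters only the two consecutive coefficients $c_j$ and $c_{j-1}$, collecting terms shows that the discrepancy of ``age'' $i:=k-j$ carries a weight bounded by $(1+\mu)\mu^{i-1}\le 2\mu^{i-1}$. Sandwiching with \eqref{eq:Lyapunov_1}, i.e. $\alpha_1(\|x_k-\widetilde x_k\|)\le V_k$ and $V_0\le\alpha_2(\|x_0-\widetilde x_0\|)$, and setting $\alpha:=\alpha_1$, $\alpha_0(r,k):=\mu^k\alpha_2(r)$, $\alpha_v(r,i):=2\mu^{i-1}\sigma_v(r)$ and $\alpha_y(r,i):=2\mu^{i-1}\sigma_y(r)$ reproduces \eqref{eq:strong_detect}; the geometric tails are summable, so $\alpha_0\in\mathcal{KL}$ and $\alpha_v,\alpha_y$ are summable $\mathcal{KL}$-functions.

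For necessity I would first reshape \eqref{eq:strong_detect} into a geometrically weighted estimate: there exist $\bar\alpha_1,\bar\alpha_2\in\mathcal{K}_\infty$ with $\bar\alpha_1\le\bar\alpha_2$, $\bar\sigma_v,\bar\sigma_y\in\mathcal{K}$ and $\mu\in(0,1)$ such that $\bar\alpha_1(\|x_k-\widetilde x_k\|)\le\mu^k\bar\alpha_2(\|x_0-\widetilde x_0\|)+\sum_{i=0}^{k}\mu^i\bigl(\bar\sigma_v(\|v_{k-i}-\widetilde v_{k-i}\|)+\bar\sigma_y(\|y_{k-i}-\widetilde y_{k-i}\|)\bigr)$, obtained via a Sontag-type reshaping of the summable $\mathcal{KL}$-gains. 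I would then define the storage function as an available-storage supremum over future solutions emanating from $(x,\widetilde x)$,
\[
V(x,\widetilde x)=\sup_{T\ge0}\ \sup_{\text{future}}\Bigl[\mu^{-T}\bar\alpha_1(\|x_T-\widetilde x_T\|)-\sum_{j=0}^{T-1}\mu^{-(j+1)}\gamma_j\Bigr],
\]
with $x_0=x$, $\widetilde x_0=\widetilde x$ and $\gamma_j$ bundling $\bar\sigma_v,\bar\sigma_y$ of the step-$j$ and step-$(j+1)$ discrepancies. The lower bound in \eqref{eq:Lyapunov_1} follows by taking $T=0$; the upper bound $V(x,\widetilde x)\le\bar\alpha_2(\|x-\widetilde x\|)$, and hence finiteness, follows by substituting the reshaped estimate and checking that its geometric disturbance weights $\mu^{-\ell}$ are dominated coefficient-wise by the subtracted weights $\mu^{-(j+1)}$. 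The dissipation \eqref{eq:Lyapunov_2} follows from a one-step shift: a horizon-$T$ future of $(x^+,\widetilde x^+)=(f(x,u,w),f(\widetilde x,u,\widetilde w))$ is the tail of a horizon-$(T+1)$ future of $(x,\widetilde x)$, so reindexing $\mu^{-T}$ to $\mu^{-(T+1)}$ produces the contraction factor $\mu$, while the now-exposed $j=0$ penalty $\gamma_0$ supplies the current and next-step measurement terms of \eqref{eq:Lyapunov_2}.

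The main obstacle is the necessity direction, in two respects. First, the reshaping from the general summable-$\mathcal{KL}$ estimate to a single geometric rate $\mu$ must simultaneously absorb the initial-condition decay $\alpha_0$ and the accumulated disturbance gains, which requires care with the nested comparison functions. Second, and more delicate, is reproducing in \eqref{eq:Lyapunov_2} the exact current-and-next-step structure with the universal quantifier over $v^+,\widetilde v^+$: because the detectability notion \eqref{eq:strong_detect} exploits current measurements (the $i=0$ terms), the shift argument must expose the next-step output discrepancy $\sigma_y(\|y^+-\widetilde y^+\|)$ consistently for all admissible $v^+,\widetilde v^+$, rather than only a one-sided past-only bound. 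Verifying finiteness of the supremum together with this consistent bookkeeping of the boundary penalty is the technical crux that ties the construction to the precise form of \eqref{eq:Lyapunov_2}.
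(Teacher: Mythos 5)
Your sufficiency argument is essentially identical to the paper's: instantiating \eqref{eq:Lyapunov_2} along two trajectories with $v^+=v_{k+1}$, $\widetilde v^+=\widetilde v_{k+1}$, telescoping, noting that each disturbance enters two consecutive steps so that the age-$i$ weight is bounded by $2\mu^{i-1}$, and sandwiching with \eqref{eq:Lyapunov_1} with the choices $\alpha=\alpha_1$, $\alpha_0(r,k)=\mu^k\alpha_2(r)$. The overall architecture of your necessity direction also mirrors the paper's, which follows \cite[Theorem~3.2]{allan2021}: an available-storage supremum over horizons with two-step-bundled penalties, the lower bound in \eqref{eq:Lyapunov_1} from $T=0$, and the dissipation \eqref{eq:Lyapunov_2} from a one-step shift that exposes the $j=0$ penalty; your weight offset $\mu^{-(j+1)}$ versus $\mu^{-j}$ plays the role of the paper's $\lambda^{-\bar k/2}$ normalization in keeping the supremum finite.

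There is, however, a genuine gap at the very first step of your necessity proof: the claim that \eqref{eq:strong_detect} can be brought into a single-geometric-rate form $\mu^i\bar\sigma_v(\cdot)$, $\mu^i\bar\sigma_y(\cdot)$ by ``a Sontag-type reshaping of the summable $\mathcal{KL}$-gains.'' No such pointwise reshaping exists in general: a summable gain such as $\beta_v(r,i)=\min\bigl(r,(1+i)^{-2}\bigr)$ decays only polynomially in $i$, and for any fixed $r>0$, any $\mu\in(0,1)$ and any $\bar\sigma_v\in\mathcal{K}$ the inequality $\beta_v(r,i)\le\mu^i\bar\sigma_v(r)$ fails for all large $i$, since a geometric envelope cannot dominate a polynomial tail. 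The geometric estimate you need is a property of trajectories of the system, not of the comparison functions, and recovering it requires the causality/semigroup structure of \eqref{eq:sys}. This is exactly where the paper invests its technical work: it (i) introduces the extended signals $y^e_k=\mathrm{col}(y_k,y_{k+1})$, $v^e_k=\mathrm{col}(v_k,v_{k+1})$ so that the current-measurement ($i=0$) terms of \eqref{eq:strong_detect} are absorbed into a strictly causal, past-only estimate; (ii) uses summability to collapse the sums into a sup-norm bound with $\mathcal{K}$-gains; (iii) invokes Proposition~\ref{prop:conv_max} --- an Allan--Rawlings-type converse result proved via an auxiliary difference inclusion --- to regain time decay in max-form; and only then (iv) applies Sontag's lemma (Proposition~\ref{prop:KL_lemma}) to land on the geometric sum-form \eqref{eq:lyapunov_proof_3} that feeds the storage-function construction. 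Your plan compresses (i)--(iii) into bookkeeping ``with nested comparison functions,'' which cannot work at that level of generality; note also that the extended-output device in (i) is precisely what later lets the shift argument expose $\sigma_y(\|y^+-\widetilde y^+\|)$ uniformly over all admissible $v^+,\widetilde v^+$ --- the boundary difficulty you correctly identify at the end --- so it is needed twice, and omitting it leaves both the reshaping and the quantifier structure of \eqref{eq:Lyapunov_2} unsupported.
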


In the following, we show that strong nonlinear detectability  is  sufficient and necessary for the existence of  a UISE specified in Definition~\ref{def:RGAS}. To this end, we first construct a FIE to show the sufficiency. Specifically, 
given some  priors $\bar{x}_0 \in \mathbb{X}$,    $\bar{\boldsymbol{v}}_{[0,k]} \in \mathbb{V}^{k+1}$ and  all past input signal $\boldsymbol{u}_{[0,k-1]} \in \mathbb{U}^k$ and all  measurements  $ \boldsymbol{y}_{[0,k]}$,     the estimate  $\hat{x}^\star_{k|k}$ for the state $x_k$  at each time $k \in \mathbb{I}_{[0, \infty)}$ is obtained by solving the following optimization problem
\begin{subequations}
	\label{eq:op_FIE}
	\begin{align}
	\min_{ \substack{\hat{x}_{0 |k }, \hat{\boldsymbol{v}}_{\cdot |k } \\ \hat{\boldsymbol{w}}_{\cdot |k } }  } &  \    J_k(\hat{x}_{0 |k }, \hat{\boldsymbol{v}}_{\cdot |k }, \hat{\boldsymbol{y}}_{\cdot |k } )     \\
	 \text{s.t.} \ &    \hat{x}_{j+1|k} =  f(\hat{x}_{j|k}, u_j, \hat{w}_{j|k}), \ j\in \mathbb{I}_{[0, k-1]}  \label{eq:op_FIE_input}\\
	&  \hat{y}_{j|k} = h(\hat{x}_{j|k},   \hat{v}_{j|k}),   \label{eq:op_FIE_output} \\
	&  \hat{x}_{j|k}\in \mathbb{X}, \  \hat{v}_{j|k} \in \mathbb{V},   \\
	& \hat{w}_{j|k} \in \mathbb{R}^q,   \ \hat{y}_{j|k}\in \mathbb{Y}, \ j\in \mathbb{I}_{[0,k]}, \label{eq:op_FIE_output_2} 
	\end{align}
\end{subequations}
with  noise estimates $\hat{\boldsymbol{v}}_{\cdot|k}=(\hat{v}_{j|k})^k_{j=0}$, unknown input estimates  $\hat{\boldsymbol{w}}_{\cdot|k}=(\hat{w}_{j|k})^k_{j=0} $    and output estimates $\hat{\boldsymbol{y}}_{\cdot|k}=(\hat{y}_{j|k})^k_{j=0}$ and the initial  state estimate $\hat{x}_{0|k}$.

The cost function therein is defined by
\begin{equation}
	\label{eq:cost_FIE_full}
	\begin{aligned}
& J_{k}(\hat{x}_{0|k },   \hat{\boldsymbol{v}}_{\cdot |k }, \hat{\boldsymbol{y}}_{\cdot |k }) = \alpha_0(2\| \hat{x}_{0|k}-\bar{x}_{0} \|, k) +  \sum_{j=0}^{k}\bigl( \alpha_v( 2\| \hat{v}_{k-j|k}-\bar{v}_{k-j}\|,j) + \alpha_y(\|y_{k-j}-\hat{y}_{k-j|k}\|, j)     \bigr)
	\end{aligned}
\end{equation}  

with $\alpha_0, \alpha_v, \alpha_y$ from Definition~\ref{def:strong_detect}.

\begin{lem}\label{lem:FIE}
	The FIE formulated by \eqref{eq:op_FIE} is a UISE if the system~\eqref{eq:sys} is strongly nonlinearly detectable. 
\end{lem}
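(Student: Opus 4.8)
The plan is to follow the standard optimality-based argument for robust full information estimation, adapted to the strong detectability notion. The essential mechanism is that the bound~\eqref{eq:strong_detect} does \emph{not} involve the discrepancy $\boldsymbol{w}-\widetilde{\boldsymbol{w}}$ of the unknown inputs, so it may be invoked for the pair consisting of the true trajectory $(\boldsymbol{x},\boldsymbol{u},\boldsymbol{w},\boldsymbol{v},\boldsymbol{y})$ and the optimal estimated trajectory $(\hat{\boldsymbol{x}}^\star_{\cdot|k},\boldsymbol{u},\hat{\boldsymbol{w}}^\star_{\cdot|k},\hat{\boldsymbol{v}}^\star_{\cdot|k},\hat{\boldsymbol{y}}^\star_{\cdot|k})$ regardless of how large the unknown inputs or their estimates are. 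This is precisely what will yield a bound on $\|x_k-\hat{x}^\star_{k|k}\|$ independent of $\boldsymbol{w}$.

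First I would bound the optimal cost. Since $(\boldsymbol{x},\boldsymbol{u},\boldsymbol{w},\boldsymbol{v},\boldsymbol{y})\in\mathbb{Z}^\infty$ is a system solution, its restriction to $[0,k]$ is feasible for~\eqref{eq:op_FIE}; evaluating~\eqref{eq:cost_FIE_full} at this candidate, the output-mismatch terms $\alpha_y(\|y_{k-j}-y_{k-j}\|,j)$ vanish, so by optimality
\begin{equation*}
J_k^\star \le \alpha_0(2\|x_0-\bar{x}_0\|,k) + \sum_{j=0}^{k}\alpha_v(2\|v_{k-j}-\bar{v}_{k-j}\|,j) =: B_k .
\end{equation*}
Because every summand of~\eqref{eq:cost_FIE_full} is nonnegative, the individual blocks $\alpha_0(2\|\hat{x}^\star_{0|k}-\bar{x}_0\|,k)$, $\sum_j\alpha_v(2\|\hat{v}^\star_{k-j|k}-\bar{v}_{k-j}\|,j)$ and $\sum_j\alpha_y(\|y_{k-j}-\hat{y}^\star_{k-j|k}\|,j)$ are each dominated by $J_k^\star\le B_k$.

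Next I would apply~\eqref{eq:strong_detect} to the true and optimal estimated trajectories. Bounding the $\alpha_0$ and $\alpha_v$ arguments via the triangle inequality and the weak triangle property $\alpha(a+b)\le\alpha(2a)+\alpha(2b)$ of $\mathcal{K}$-functions, e.g. $\alpha_0(\|x_0-\hat{x}^\star_{0|k}\|,k)\le\alpha_0(2\|x_0-\bar{x}_0\|,k)+\alpha_0(2\|\bar{x}_0-\hat{x}^\star_{0|k}\|,k)$, the ``prior-versus-estimate'' pieces match exactly the cost blocks above — this is the reason for the factor $2$ deliberately placed inside $\alpha_0$ and $\alpha_v$ in~\eqref{eq:cost_FIE_full}. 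Hence each such piece, together with the $\alpha_y$ term, is bounded by $J_k^\star\le B_k$, and collecting terms gives
\begin{equation*}
\alpha(\|x_k-\hat{x}^\star_{k|k}\|) \le 4\alpha_0(2\|x_0-\bar{x}_0\|,k) + 4\sum_{i=0}^{k}\alpha_v(2\|v_{k-i}-\bar{v}_{k-i}\|,i).
\end{equation*}
Setting $\beta(s):=\alpha(s)\in\mathcal{K}_\infty$, $\beta_0(s,k):=4\alpha_0(2s,k)\in\mathcal{KL}$ and $\beta_v(s,i):=4\alpha_v(2s,i)$, where the latter remains a summable $\mathcal{KL}$-function since constant scaling and argument rescaling preserve summability, recovers~\eqref{eq:RGES} and proves the claim.

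The main obstacle is less the algebra than two consistency points. First, both~\eqref{eq:cost_FIE_full} and~\eqref{eq:strong_detect} incorporate the \emph{current} output and noise (index $i=0$), unlike conventional one-step-ahead MHE; I must carry these terms consistently through the triangle-inequality splitting so that the $i=0$ contribution is still absorbed into $J_k^\star$. Second, Definition~\ref{def:strong_detect} is phrased over infinite solutions in $\mathbb{Z}^\infty$, whereas the optimizer yields only a finite segment on $[0,k]$; I would extend the optimal estimated trajectory to an admissible infinite solution (equivalently, invoke the inequality at time $k$, which references only indices up to $k$) so that~\eqref{eq:strong_detect} legitimately applies. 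A minor preliminary point is the well-posedness of~\eqref{eq:op_FIE}, i.e.\ attainment of the minimum, which follows from continuity of $f,h$ together with nonnegativity and coercivity of the cost.
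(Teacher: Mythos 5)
Your proposal is correct and follows essentially the same route as the paper's proof: apply \eqref{eq:strong_detect} to the true and optimal estimated trajectories, split the $\alpha_0$ and $\alpha_v$ terms via the weak triangle inequality so that the estimate-versus-prior pieces reproduce the cost \eqref{eq:cost_FIE_full}, and then invoke optimality with the true trajectory as feasible candidate (for which the $\alpha_y$ terms vanish). The only difference is bookkeeping: the paper observes that the three estimate blocks sum \emph{exactly} to $J_k^\star$ and so obtains the bound $2\alpha_0(2\|x_0-\bar{x}_0\|,k)+\sum_{i=0}^{k}2\alpha_v(2\|v_{k-i}-\bar{v}_{k-i}\|,i)$, whereas bounding each block separately as you do yields the slightly looser constant $4$, which is immaterial for the UISE property.
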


\begin{proof}
		Let us define $\hat{x}_k:=\hat{x}^\star_{k|k}$. By applying \eqref{eq:strong_detect}, the optimality of $J_k( \hat{x}^\star_{0 |k },  \hat{\boldsymbol{v}}^\star_{\cdot |k }, \hat{\boldsymbol{y}}^\star_{\cdot |k } )$,   and the  weak triangle inequality \cite[(6)]{jiang1994}, we get
	\begin{align*}
	 &	\alpha(\|x_k - \hat{x}_{k} \|)  \leq \alpha_0(2\|x_0-\bar{x}_0\|, k) + J_k( \hat{x}^\star_{0 |k },  \hat{\boldsymbol{v}}^\star_{\cdot |k }, \hat{\boldsymbol{y}}^\star_{\cdot |k } )  + \sum_{j=0}^{k} \alpha_v (2\| v_{k-j}- \bar{v}_{k-j} \|, j)   \\
     & \leq \alpha_0(2\|x_0-\bar{x}_0\|, k)  +\sum_{j=0}^{k} \alpha_v (2\| v_{k-j}-\bar{v}_{k-j} \|, j) + J_k( x_0,  \boldsymbol{v}_{[0,k] }, \boldsymbol{y}_{[0, k] } ) \\
	& =  2\alpha_0(2\|x_0-\bar{x}_{0}\|, k) + \sum_{j=0}^{k}2 \alpha_v (2\| v_{k-j}-\bar{v}_{k-j} \|, j)
	\end{align*}
	for $k \in \mathbb{I}_{[0, \infty)}$. 
		Replacing $\alpha(r)$, $2\alpha_0(2r, k)$ and $2 \alpha_v(2r,k)$ by $\beta(r)$,  $\beta_0(r,k)$ and $\beta_v(r, k)$ in the above inequality completes the proof.  
\end{proof}

\begin{thm}
\label{thm:detectability_necessary}
	There exists a UISE ($\Psi_k$) for the system~\eqref{eq:sys} if and only if the system~\eqref{eq:sys} is strongly nonlinearly detectable. 
\end{thm}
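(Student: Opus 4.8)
The statement is an equivalence, so I would prove the two implications separately. The direction ``strong nonlinear detectability $\Rightarrow$ existence of a UISE'' is already settled: Lemma~\ref{lem:FIE} exhibits the FIE~\eqref{eq:op_FIE} as an explicit UISE whenever \eqref{eq:sys} is strongly nonlinearly detectable, so nothing further is needed there. The whole burden therefore falls on the converse, ``existence of a UISE $\Rightarrow$ strong nonlinear detectability,'' which I would establish constructively from the assumed estimator $\Psi_k$.

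For the converse, fix two arbitrary solutions $(\boldsymbol{x},\boldsymbol{u},\boldsymbol{w},\boldsymbol{v},\boldsymbol{y})$ and $(\widetilde{\boldsymbol{x}},\boldsymbol{u},\widetilde{\boldsymbol{w}},\widetilde{\boldsymbol{v}},\widetilde{\boldsymbol{y}})$ sharing the same control input. The plan rests on two elementary consequences of Definition~\ref{def:RGAS}. First, \emph{exact recovery under perfect priors}: feeding $\Psi_k$ a solution's own initial state, noise, input and output makes the right-hand side of \eqref{eq:RGES} vanish (since $\beta_0(0,\cdot)=0$ and $\beta_v(0,\cdot)=0$), and as $\beta\in\mathcal{K}_\infty$ is strictly increasing this forces the estimate to equal the true state; in particular $\widetilde{x}_k=\Psi_k(\widetilde{x}_0,\widetilde{\boldsymbol{v}}_{[0,k]},\boldsymbol{u}_{[0,k-1]},\widetilde{\boldsymbol{y}}_{[0,k]})$. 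Second, \emph{graceful degradation under prior mismatch}: evaluating $\Psi_k$ on the \emph{first} solution's output $\boldsymbol{y}$ but the \emph{second} solution's priors $(\widetilde{x}_0,\widetilde{\boldsymbol{v}})$ yields an estimate $\hat{x}_k$ to which \eqref{eq:RGES}, applied to the first solution, assigns the bound $\beta(\|x_k-\hat{x}_k\|)\le \beta_0(\|x_0-\widetilde{x}_0\|,k)+\sum_{i=0}^{k}\beta_v(\|v_{k-i}-\widetilde{v}_{k-i}\|,i)$. Combining these through the triangle inequality $\|x_k-\widetilde{x}_k\|\le\|x_k-\hat{x}_k\|+\|\hat{x}_k-\widetilde{x}_k\|$ already reproduces the initial-state and measurement-noise contributions $\alpha_0$ and $\alpha_v$ of \eqref{eq:strong_detect}; a final relabelling of the class functions via the weak triangle inequality \cite[(6)]{jiang1994}, exactly as in the proof of Lemma~\ref{lem:FIE}, casts them into the required form.

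The remaining and decisive term is the residual $\|\hat{x}_k-\widetilde{x}_k\|=\|\Psi_k(\widetilde{x}_0,\widetilde{\boldsymbol{v}},\boldsymbol{u},\boldsymbol{y})-\Psi_k(\widetilde{x}_0,\widetilde{\boldsymbol{v}},\boldsymbol{u},\widetilde{\boldsymbol{y}})\|$, in which only the fed output differs; this is precisely the quantity that must be dominated by a summable $\mathcal{KL}$-function $\alpha_y$ of the output mismatches $\|y_{k-i}-\widetilde{y}_{k-i}\|$. I expect this to be the main obstacle, and its difficulty is structural: the UISE bound \eqref{eq:RGES} carries no output term at all, and $\Psi_k$ is a priori merely a map, so without further information one cannot control its response to a change of the output argument. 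I would address it by exploiting continuity of $\Psi_k$ in the output on the bounded set $\mathbb{Y}$ to obtain a modulus of continuity, while recovering the correct decaying, summable time-weighting from the geometric contraction inherent in estimator stability---most cleanly by routing through the equivalent Lyapunov characterization of Theorem~\ref{theo:Lyapunov}, whose factor $\mu\in(0,1)$ supplies exactly the $i$-dependent decay demanded by \eqref{eq:strong_detect}.

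As a sanity check that the output term is genuinely necessary and correctly placed, I would treat the degenerate case of identical outputs $\boldsymbol{y}=\widetilde{\boldsymbol{y}}$ first: there the residual vanishes by exact recovery, $\alpha_y$ contributes nothing, and one reads off detectability directly. Indeed, were two solutions with identical inputs, initial states, noises and outputs to have $x_k\neq\widetilde{x}_k$, then feeding $\Psi_k$ their common data would return a single estimate that, by exact recovery applied to each solution in turn, would have to equal both $x_k$ and $\widetilde{x}_k$---a contradiction. This confirms that no UISE can exist without detectability and pins down the role of $\alpha_y$ in the general case.
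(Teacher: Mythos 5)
Your sufficiency direction coincides with the paper's (both just invoke Lemma~\ref{lem:FIE}), and the first two moves of your necessity argument --- exact recovery under perfect priors, which the paper states as ``if $\bar{x}_0=\widetilde{x}_0$ and $\bar{\boldsymbol{v}}=\widetilde{\boldsymbol{v}}$, then $\bar{\boldsymbol{x}}=\widetilde{\boldsymbol{x}}$'', followed by an application of \eqref{eq:RGES} with the mismatched priors $(\widetilde{x}_0,\widetilde{\boldsymbol{v}})$ --- are exactly the paper's. The genuine gap is the residual you yourself flag, $\|\Psi_k(\widetilde{x}_0,\widetilde{\boldsymbol{v}},\boldsymbol{u},\boldsymbol{y})-\Psi_k(\widetilde{x}_0,\widetilde{\boldsymbol{v}},\boldsymbol{u},\widetilde{\boldsymbol{y}})\|$, and neither of your proposed remedies can close it. Definition~\ref{def:estimator} makes $\Psi_k$ a bare sequence of maps: no continuity in the output argument is assumed, so no modulus of continuity exists in general, and even granting continuity on the compact set $\mathbb{Y}^{k+1}$ you would obtain a $k$-dependent class-$\mathcal{K}$ bound with no mechanism to produce the summable $\mathcal{KL}$ time-weighting that \eqref{eq:strong_detect} demands. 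Worse, ``routing through Theorem~\ref{theo:Lyapunov}'' is circular: that Lyapunov characterization is \emph{equivalent} to strong nonlinear detectability, i.e., to the very conclusion of the necessity direction, so its contraction factor $\mu$ is not available as a hypothesis about the estimator.

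The missing idea --- which is why the paper never forms your residual --- is to avoid evaluating $\Psi_k$ at two different output sequences altogether and instead re-attribute the output mismatch to the \emph{noise channel}: regard the first trajectory's measured output $\boldsymbol{y}$ as generated by the second state trajectory under a perturbed noise sequence $\hat{v}_i$ with $h(\widetilde{x}_i,\hat{v}_i)=y_i$, so that, roughly, $\|\hat{v}_i-\widetilde{v}_i\|$ is controlled by $\|y_i-\widetilde{y}_i\|$; applying \eqref{eq:RGES} to this re-interpreted trajectory (whose prior mismatch now sits entirely in the noise argument, which \eqref{eq:RGES} \emph{does} penalize) bounds $\beta(\|\widetilde{x}_k-\Psi_k(\widetilde{x}_0,\widetilde{\boldsymbol{v}},\boldsymbol{u},\boldsymbol{y})\|)$ by sums of $\beta_v(\|y_{k-i}-\widetilde{y}_{k-i}\|,i)$, after which your triangle-inequality combination goes through. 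This is precisely why the output term in the paper's final display carries the noise gain $\beta_v$, and it requires no regularity of $\Psi_k$ whatsoever. (The paper's own ``Consequently'' step leaves this re-attribution implicit, together with the realizability of such $\hat{v}_i\in\mathbb{V}$, which is automatic when the noise enters the output additively with a rich enough $\mathbb{V}$ but is an assumption in general.) Note finally that your closing ``sanity check'' with $\boldsymbol{y}=\widetilde{\boldsymbol{y}}$ is not a degenerate corner case but the only case your argument actually completes.
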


\begin{proof}
The ``if" statement follows directly from Lemma~\ref{lem:FIE}. 
To show the ``only if" statement, let us suppose that the UISE ($\Psi_k$) is given by
\begin{equation*}
\bar{x}_k = \Psi_k\bigl(\bar{x}_0, \bar{\boldsymbol{v}}_{[0,k]}, \boldsymbol{u}_{[0,k-1]}, \boldsymbol{y}_{[0,k]}   \bigr).
\end{equation*}
Then, by Definition~\ref{def:RGAS}, there exist $\beta \in \mathcal{K}_{\infty}$,  summable $\beta_0, \beta_v \in \mathcal{KL}$, such that the  resulting state $\bar{x}_k$ satisfies
\begin{equation*}
\beta(\|x_k-\bar{x}_k\|) \leq  \beta_0( \|x_0-\bar{x}_0 \|, k ) + \sum_{i=0}^{k} \beta_v (\|v_{k-i}-\bar{v}_{k-i}\| , i )
\end{equation*}
for all $k\in \mathbb{I}_{[0, \infty)}$ and  any trajectory $(\boldsymbol{x},  \boldsymbol{u}, \boldsymbol{w},  \boldsymbol{v}, \boldsymbol{y}) \in \mathbb{Z}^\infty $ of \eqref{eq:sys}. Therefore, given any solution $(\widetilde{\boldsymbol{x}},  \boldsymbol{u}, \widetilde{\boldsymbol{w}},  \widetilde{\boldsymbol{v}}, \widetilde{\boldsymbol{y}}) \in \mathbb{Z}^\infty$ of  \eqref{eq:sys}, if $\bar{x}_0=\widetilde{x}_0$ and $\bar{\boldsymbol{v}}= \widetilde{\boldsymbol{v}}$, then  $\bar{\boldsymbol{x}} = \widetilde{\boldsymbol{x}}$. Consequently, 
\begin{align*} 
\beta &(\|x_k-\widetilde{x}_k\|)  \leq  \beta_0( \|x_0-\widetilde{x}_0 \|, k )  +\sum_{i=0}^{k} \bigl( \beta_v (\|v_{k-i}-\widetilde{v}_{k-i}\| , i ) + \beta_v (\|y_{k-i}-\widetilde{y}_{k-i}\| , i )   \bigr), 
\end{align*}
which completes the proof.  
\end{proof}

 Theorem~\ref{thm:detectability_necessary} holds even when $\alpha_0$ in \eqref{eq:strong_detect} and $\beta_0$ in \eqref{eq:RGES} are in class $\mathcal{GKL}$, which follows from  the same arguments used in the proof of Theorem~\ref{thm:detectability_necessary}  along with  the fact that  the weak triangular  inequality  \cite[(6)]{jiang1994} applies   to any $\alpha_0(r,k)$ in  $\mathcal{GKL}$ for any fixed $k$ as indicated by Proposition~\ref{prop:weak_triangular} in the Appendix.

%
 The following result shows that, for a linear system described by
\begin{equation}
	\label{eq:UIO_LTI}
	x_{k+1}= Ax_k+ B w_k, \ y_k= Cx_k
\end{equation}
with $(x_k, w_k, y_k)\in  \mathbb{X} \times \mathbb{R}^q \times \mathbb{Y}= \mathbb{Z} \subseteq \mathbb{R}^{n+q+p}$ and $0 \in \mathbb{X} \times \mathbb{Y}$,
  the proposed strong nonlinear detectability  reduces to the classical conditions\footnote{They are equivalent to the notion of  strong$^{\ast}$ detectability for continuous-time LTIs, cf. \cite{hautus1983}} introduced by Hautus, which are  sufficient and necessary for the existence of  UIOs, cf.   \cite{hautus1983}.   
\begin{lem}\label{lem:UIO_LTI}
The system~\eqref{eq:UIO_LTI} is strongly nonlinear detectable according to Definition~\ref{def:strong_detect} if and only if (i)  $\lim_{k\to \infty}\|x_k\|= 0$, when  $y_k=0$ for all  $k\in \mathbb{I}_{>0}$,   and (ii)  $\text{rank}(CB)=\text{rank}(B)$.
\end{lem}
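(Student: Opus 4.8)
The plan is to exploit linearity throughout. Since \eqref{eq:UIO_LTI} carries no measurement noise, the $\alpha_v$-terms in \eqref{eq:strong_detect} drop out, and working with the error variables $e_k := x_k - \widetilde{x}_k$, $\delta_k := w_k - \widetilde{w}_k$, $\eta_k := y_k - \widetilde{y}_k$ reduces the two-trajectory condition to a single-trajectory statement about the error system $e_{k+1} = A e_k + B\delta_k$, $\eta_k = C e_k$, with $\delta_k \in \mathbb{R}^q$ arbitrary. Strong nonlinear detectability then reads: there are $\alpha \in \mathcal{K}_\infty$, $\alpha_0 \in \mathcal{KL}$ and summable $\alpha_y \in \mathcal{KL}$ with $\alpha(\|e_k\|) \le \alpha_0(\|e_0\|, k) + \sum_{i=0}^k \alpha_y(\|\eta_{k-i}\|, i)$ for every such error trajectory. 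I would prove the two implications separately.

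For necessity I would extract (i) and (ii) from this bound. For (i), I take the comparison trajectory $\widetilde{x} \equiv 0$ (with $\widetilde{w} \equiv 0$), so that $e_k = x_k$ and $\eta_k = y_k$; if $y_k = 0$ for all $k > 0$, only the $i=k$ summand $\alpha_y(\|y_0\|, k)$ survives, and letting $k \to \infty$ makes both $\alpha_0(\|x_0\|, k)$ and $\alpha_y(\|y_0\|, k)$ vanish, whence $\alpha(\|x_k\|) \to 0$ and thus $\|x_k\| \to 0$. For (ii), I argue by contradiction: if $\operatorname{rank}(CB) < \operatorname{rank}(B)$ there is a $\delta_0$ with $B\delta_0 \ne 0$ but $CB\delta_0 = 0$; choosing $e_0 = 0$ and injecting $\delta_0$ gives $e_1 = B\delta_0 \ne 0$ with $\eta_0 = 0$ and $\eta_1 = CB\delta_0 = 0$, so the bound at $k=1$ collapses to $\alpha(\|B\delta_0\|) \le \alpha_0(0,1) + \alpha_y(0,0) + \alpha_y(0,1) = 0$, forcing $B\delta_0 = 0$, a contradiction. (A minor point to address is feasibility, i.e. scaling $\delta_0$ so the short trajectory stays in $\mathbb{Z}$; this is harmless since any nonzero scaling keeps $\|e_1\| > 0$.)

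For sufficiency I would build the bound from (i) and (ii). Condition (ii) yields $\ker(CB) = \ker(B)$, so $M := B(CB)^+$ satisfies $MCB = B$ and hence $B\delta_k = M CB\delta_k = M(\eta_{k+1} - CA e_k)$, using $\eta_{k+1} = CA e_k + CB\delta_k$. Substituting eliminates the unknown input and gives the exact, $\delta$-free recursion $e_{k+1} = \bar{A} e_k + M\eta_{k+1}$ with $\bar{A} := (I - MC)A$; adding and subtracting $LCe_k = L\eta_k$ for a free gain $L$ turns this into $e_{k+1} = (\bar{A} - LC) e_k + L\eta_k + M\eta_{k+1}$. It remains to choose $L$ so that $\bar{A} - LC$ is Schur, i.e. to show $(C, \bar{A})$ is detectable. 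Here I use (i): the zero-output trajectories ($\eta_k = 0$ for $k \ge 1$) are exactly $e_k = \bar{A}^k e_0$ with $e_0$ in the unobservable subspace $\mathcal{S}$ of the pair $(C\bar{A}, \bar{A})$, so (i) forces $\bar{A}|_{\mathcal{S}}$ to be Schur; since the unobservable subspace $\mathcal{O}$ of $(C, \bar{A})$ is $\bar{A}$-invariant and contained in $\mathcal{S}$, the restriction $\bar{A}|_{\mathcal{O}}$ is Schur as well, which is precisely detectability of $(C, \bar{A})$. With $A_L := \bar{A} - LC$ Schur, unrolling gives $e_k = A_L^k e_0 + \sum_{j=0}^{k-1} A_L^{k-1-j}(L\eta_j + M\eta_{j+1})$, and $\|A_L^k\| \le c\rho^k$ with $\rho < 1$ together with reindexing produces the required estimate, with $\alpha(r) = r$, $\alpha_0(r,k) = c\rho^k r \in \mathcal{KL}$ and a geometric, hence summable, $\alpha_y \in \mathcal{KL}$.

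The main obstacle I anticipate lies in the sufficiency direction, specifically the correct identification and handling of the subspaces: verifying that the rank condition makes the input reconstruction $B\delta_k = M(\eta_{k+1} - CAe_k)$ exact via $MCB = B$, and then translating the scalar-looking statement (i) into the algebraic fact that $\bar{A}|_{\mathcal{S}}$ Schur yields detectability of $(C, \bar{A})$ through the inclusion $\mathcal{O} \subseteq \mathcal{S}$. The use of the one-step-ahead output $\eta_{k+1}$ to cancel the unknown input, combined with a gain on the current output $\eta_k$, is exactly the ``exploit the current measurement'' mechanism emphasized in the paper, and getting the bookkeeping of these two output terms right is the crux of producing a valid $\mathcal{KL}$ estimate.
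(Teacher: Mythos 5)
Your proof is correct. For necessity it coincides with the paper's argument: condition (i) is read off from \eqref{eq:strong_detect} along the comparison trajectory $\widetilde{x}\equiv 0$, $\widetilde{w}\equiv 0$, and (ii) is obtained by exactly the paper's contradiction with $w_0\in\ker(CB)\setminus\ker(B)$, $x_0=\widetilde{x}_0$, which makes the right-hand side of \eqref{eq:strong_detect} vanish at $k=1$ while $x_1-\widetilde{x}_1=Bw_0\neq 0$ (the paper shares your implicit feasibility caveat about the constructed trajectories lying in $\mathbb{Z}^\infty$). For sufficiency you follow the same elimination idea as the paper --- use $\mathrm{rank}(CB)=\mathrm{rank}(B)$ and the one-step-ahead output to reconstruct $B\delta_k$ and obtain an unknown-input-free error recursion --- but you implement it differently and, at one step, more carefully. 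The paper factorizes $BT=(\widetilde{B},0)$, picks a left inverse $Y$ of $C\widetilde{B}$, arrives at $e_k=(A-\widetilde{B}YCA)e_{k-1}+\widetilde{B}Y\eta_k$, and then concludes directly from (i) that the error matrix is Schur; your version instead uses the pseudo-inverse $M=B(CB)^+$ (equivalent in effect, since $MCB=B$ under (ii)) and inserts the additional output injection $L\eta_k$, proving detectability of $(C,\bar{A})$ from (i) via the subspace inclusion $\mathcal{O}\subseteq\mathcal{S}$ and then choosing $L$ so that $\bar{A}-LC$ is Schur. This extra step is not cosmetic: condition (i) only constrains $\bar{A}$ on the subspace $\mathcal{S}$ of initial errors from which zero-output trajectories are realizable (realizability requires $CAe_k\in\operatorname{im}(CB)$ at every step), so for a fixed left inverse the paper's matrix need not be Schur even when (i) and (ii) hold. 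A concrete instance is $C=I_2$, $B=(1,0)^\top$, $A=\begin{pmatrix} a & b\\ 0 & 2\end{pmatrix}$: here (i) holds trivially (zero output forces $e_k=0$ for $k\geq 1$) and (ii) holds, yet $A-\widetilde{B}YCA$ has the eigenvalue $2$ for \emph{every} left inverse $Y$, whereas $(C,\bar{A})$ is detectable and your gain $L$ stabilizes the recursion. So your route buys a fully rigorous closure of the step the paper treats tersely, at the cost of the additional subspace bookkeeping ($\mathcal{S}$ being $\bar{A}$-invariant, zero-output trajectories being exactly $\bar{A}^k e_0$ with $e_0\in\mathcal{S}$, and $\rho(\bar{A}|_{\mathcal{O}})\leq\rho(\bar{A}|_{\mathcal{S}})<1$), all of which you verify correctly; the final geometric bound with $\alpha(r)=r$, $\alpha_0(r,k)=c\rho^k r$ and a summable geometric $\alpha_y$ matches the paper's closed-form-solution estimate, including the index-$0$ (current-output) term required by Definition~\ref{def:strong_detect}.
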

\begin{proof}
\textbf{Necessity}: The condition~(i) follows immediately from \eqref{eq:strong_detect}. To show the condition~(ii), we employ a proof by contradiction. Specifically,  let us assume the condition~(ii) does not hold, i.e., $\text{rank}(CB) \neq \text{rank}(B)$. As a result, we have $\text{rank}(CB)< \text{rank}(B)$, which  implies
$\text{ker}(B) \subset \text{ker}(CB)$, and consequently $ \text{ker}(CB) \setminus \text{ker}(B) \neq \emptyset $. 
Then we can choose  a trajectory $(\boldsymbol{x},  \boldsymbol{w}, \boldsymbol{y}) \in \mathbb{Z}^{\infty}$ with  $x_0=0$, $w_0 \in \text{ker}(CB) \setminus \text{ker}(B)$, $w_i=0$ for $i \in \mathbb{I}_{[0, \infty)}$ and another trajectory  $(\widetilde{\boldsymbol{x}}, 
 \widetilde{\boldsymbol{w}}, \widetilde{\boldsymbol{y}}) \in \mathbb{Z}^\infty$ with $\widetilde{x}_0=0$ and $\widetilde{w}_i \equiv 0$ satisfying \eqref{eq:UIO_LTI}.  
It follows  that 
\noindent \begin{align*}
x_k-\widetilde{x}_k &=A(x_{k-1}-\widetilde{x}_{k-1})+B (w_{k-1}-\widetilde{w}_{k-1}), \\
y_{k}-\widetilde{y}_k&=CA(x_{k-1}-\widetilde{x}_{k-1})+ CB(w_{k-1}-\widetilde{w}_{k-1})	
\end{align*}
with $k\in \mathbb{I}_{>0}$. Hence,  $x_1-\widetilde{x}_1 \neq 0$ and $y_1-\widetilde{y}_1=0$, which together with $y_0-\widetilde{y}_0=C(x_0-\widetilde{x}_0)=0$ implies that
\eqref{eq:strong_detect} does not hold for $k=1$, thereby showing that  the system~\eqref{eq:UIO_LTI} is not strong nonlinear detectable. We can therefore prove that $\text{rank}(CB) = \text{rank}(B)$. 

\textbf{Sufficiency}: Let us denote $l:=\text{rank}(B)$. Then there exists a nonsingular matrix $T \in \mathbb{R}^{q \times q}$, such that $BT=(\widetilde{B}, 0_{n \times (q-l)})$ and $\widetilde{B}$ has full column rank. The state dynamic of \eqref{eq:UIO_LTI} can be reformulated into
$$x_{k+1}=Ax_k+\widetilde{B} \widetilde{T} w_k $$
with $ \widetilde{T}:= (I_l, 0_{l \times(q-l)}) T^{-1}$. 
Consequently, we have
\begin{subequations}
	\noindent \begin{align}
		x_{k}-\widetilde{x}_k &=A(x_{k-1}-\widetilde{x}_{k-1})+\widetilde{B}\widetilde{T}(w_{k-1}-\widetilde{w}_{k-1}), \label{eq:proof_UIO_LTI_1} \\
	y_{k}-\widetilde{y}_k&=C (x_k-\widetilde{x}_k),\label{eq:proof_UIO_LTI_2}
	\end{align}
\end{subequations}
for any two  trajectories $(\boldsymbol{x}, \boldsymbol{w}, \boldsymbol{y})$,  $(\widetilde{\boldsymbol{x}}, \widetilde{\boldsymbol{w}}, \widetilde{\boldsymbol{y}}) \in \mathbb{Z}$  of \eqref{eq:UIO_LTI}. 
Due to (ii), we have $\text{rank}(\widetilde{B})=\text{rank}( C\widetilde{B})$, implying that $C\widetilde{B}$ admits a left inverse $Y$. Hence, inserting \eqref{eq:proof_UIO_LTI_1} into \eqref{eq:proof_UIO_LTI_2}  and then solving it for $\widetilde{T}(w_{k-1}-\widetilde{w}_{k-1})$ gives
$$
\widetilde{T}(w_{k-1}-\widetilde{w}_{k-1})= Y(y_k -\widetilde{y}_k) -YCA(x_{k-1}-\widetilde{x}_{k-1}). 
$$ 
Inserting it into \eqref{eq:proof_UIO_LTI_1} yields
\begin{equation}
\label{eq:proof_UIO_LTI_3}
x_{k}-\widetilde{x}_k = (A-YCA)(x_{k-1}-\widetilde{x}_{k-1})+\widetilde{B}Y(y_k-\widetilde{y}_k).
\end{equation}
Choosing $\widetilde{x}_0=0$ as well as  $\widetilde{w}_k \equiv 0$ for all $k\in \mathbb{I}_{[0, \infty)}$, and  following the condition~(i),  we conclude that  $A-YCA$  is Schur stable. This allows us to bound the closed-form solution of \eqref{eq:proof_UIO_LTI_3},  and hence showing the strong nonlinear detectability of \eqref{eq:UIO_LTI}.
\end{proof}

\subsection{MHE-based UISE} \label{sec:MHE_full}

As the complexity of FIE increases over time, we consider estimating $x_k$ at each time $k\in \mathbb{I}_{[0, \infty)}$ via MHE using  the batch of input signals $\boldsymbol{u}_{[k-N_k, k-1]}$ and  output measurements $\boldsymbol{y}_{[k-N_k, k]}$ with  $N_k = \min(N, k)$, where $N \in \mathbb{N}$ is the prediction horizon. 
At each time $k$,   the estimate $\hat{x}^\star_{k|k}$ for $x_k$  is computed by minimizing $J_{N_k}(\hat{x}_{k-N_k |k },  \hat{\boldsymbol{v}}_{\cdot |k }, \hat{\boldsymbol{y}}_{\cdot |k })$ 
subject to the constraint \eqref{eq:op_FIE_input}, $j\in \mathbb{I}_{[k-N_k, k-1]}$,  and the constraint \eqref{eq:op_FIE_output}-\eqref{eq:op_FIE_output_2} with $j\in \mathbb{I}_{[k-N_k, k]}$.  The cost function $J_{N_k}$ is defined by
\begin{equation}
\label{eq:cost_MHE_full}
\begin{aligned}
J_{N_k}(\hat{x}_{k-N_k|k },   \hat{\boldsymbol{v}}_{\cdot |k }, \hat{\boldsymbol{y}}_{\cdot |k }) &=  \mu^{N_k}\alpha_2(2\| \hat{x}_{k-N_k|k}-\hat{x}_{k-N_k} \|) + \sum_{j=0}^{N_k}\mu^{j-1}2  \sigma_v( 2\| \hat{v}_{k-j|k}-\bar{v}_{k-j}\|) \\
&+\sum_{j=0}^{N_k}\mu^{j-1}2  \sigma_y(\|y_{k-j}-\hat{y}_{k-j|k}\|), 
\end{aligned}
\end{equation} 
with some prior $\bar{\boldsymbol{v}}$ as well as $\mu$, $\alpha_2$, $\sigma_v, \sigma_y$ chosen according to \eqref{eq:Lyapunov} in Theorem~\ref{theo:Lyapunov}, provided that the system~\eqref{eq:sys} is strongly nonlinearly detectable. The prior $\hat{x}_{k-N_k}$ is determined as follows
\begin{equation}
\label{eq:prior_rule}
	\hat{x}_{k} = \begin{cases}
		\hat{x}^\star_{k | k} &  k \in  \mathbb{I}_{>0}\\
		\bar{x}_0 & k=0. 
	\end{cases}
\end{equation}
Let us refer to the resulting MHE  as  $MHE_{N_k}$. In the following, by adapting the Lyapunov-like technique from  \cite{schiller2023},  
we  show that the $MHE_{N_k}$ with a long enough horizon $N$ constitutes a UISE.
\begin{thm}\label{theo:MHE_full}
	Suppose that the system~\eqref{eq:sys} is strongly nonlinearly detectable.  Let the horizon $N$ be chosen such that
	\begin{equation}
	 \label{eq:MHE_full_horizon}  (2\mu^{N}\alpha_2 \circ  2 \alpha^{-1}_1)(r)< \rho r, \ \forall  r \in (0, \infty)
	\end{equation}
	with  $\mu, \alpha_1, \alpha_2$ specified in \eqref{eq:Lyapunov} and some $\rho \in(0,1)$, 
then the estimator $MHE_{N_k}$ is an UISE. 
\end{thm}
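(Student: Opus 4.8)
The plan is to adapt the Lyapunov-based MHE analysis of \cite{schiller2023}: combine the dissipation inequality \eqref{eq:Lyapunov_2} with the optimality of the moving-horizon estimate to produce a bound on the error after one horizon, and then convert the horizon condition \eqref{eq:MHE_full_horizon} into a genuine $N$-step contraction that can be iterated into the $\mathcal{KL}$/summable-$\mathcal{KL}$ form of Definition~\ref{def:RGAS}. Throughout I write $e_k := \|x_k - \hat{x}_k\|$ and compare the true trajectory $(x_j,u_j,w_j,v_j,y_j)$ with the optimal estimated trajectory $(\hat{x}^\star_{j|k}, u_j, \hat{w}^\star_{j|k}, \hat{v}^\star_{j|k}, \hat{y}^\star_{j|k})$ on the horizon $j \in \mathbb{I}_{[k-N_k,k]}$.

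\emph{First} I would iterate \eqref{eq:Lyapunov_2} along these two trajectories. Since the unknown-input discrepancy does not enter the right-hand side of \eqref{eq:Lyapunov_2}, this is legitimate for the freely chosen $\hat{w}^\star_{\cdot|k}$ and, after telescoping and using the lower bound in \eqref{eq:Lyapunov_1}, yields
\begin{equation*}
\alpha_1(e_k) \le \mu^{N_k} V(x_{k-N_k}, \hat{x}^\star_{k-N_k|k}) + \Sigma,
\end{equation*}
where $\Sigma$ is a $\mu$-weighted sum of the stage mismatches $\sigma_v(\|v_j - \hat{v}^\star_{j|k}\|)$ and $\sigma_y(\|y_j - \hat{y}^\star_{j|k}\|)$. \emph{Next} I would bound the prior term via the upper bound in \eqref{eq:Lyapunov_1} and the weak triangle inequality, splitting through the warm-started prior $\hat{x}_{k-N_k}$ of \eqref{eq:prior_rule}, and likewise split each $\|v_j - \hat{v}^\star_{j|k}\|$ through the noise prior $\bar{v}_j$. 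Owing to the $\mu^{j-1}$ weighting and the factor $2$ in the cost \eqref{eq:cost_MHE_full}, every resulting term involving the optimizers $\hat{x}^\star,\hat{v}^\star,\hat{y}^\star$ is dominated by the optimal cost $J^\star_{N_k}$, giving $\alpha_1(e_k) \le \mu^{N_k}\alpha_2(2\|x_{k-N_k}-\hat{x}_{k-N_k}\|) + J^\star_{N_k} + (\text{true-vs-prior noise})$.

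\emph{Then} I would bound $J^\star_{N_k}$ from above by inserting the true trajectory as a feasible candidate (admissible because $\hat{w}_{\cdot|k}\in\mathbb{R}^q$ is unconstrained): the output-mismatch terms vanish and $J^\star_{N_k} \le \mu^{N_k}\alpha_2(2\|x_{k-N_k}-\hat{x}_{k-N_k}\|) + \sum_{j=0}^{N_k}\mu^{j-1}2\sigma_v(2\|v_{k-j}-\bar{v}_{k-j}\|)$. Combining the three estimates collapses the prior dependence into the single term $2\mu^{N_k}\alpha_2(2\|x_{k-N_k}-\hat{x}_{k-N_k}\|)$ plus a summable, $\mu$-weighted sum of the true-vs-prior noises. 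For $k\ge N$ we have $N_k=N$, so setting $r=\alpha_1(\|x_{k-N}-\hat{x}_{k-N}\|)$ the horizon condition \eqref{eq:MHE_full_horizon} gives $2\mu^{N}\alpha_2(2\|x_{k-N}-\hat{x}_{k-N}\|)=2\mu^{N}(\alpha_2\circ 2\alpha_1^{-1})(r)<\rho\,\alpha_1(\|x_{k-N}-\hat{x}_{k-N}\|)$, i.e.\ $\alpha_1(e_k)\le \rho\,\alpha_1(e_{k-N}) + (\text{noise})$ with $\rho\in(0,1)$. The transient $k<N$ (where $k-N_k=0$ and the prior is $\bar{x}_0$) furnishes the base case bounding $\alpha_1(e_k)$ by $2\mu^{k}\alpha_2(2\|x_0-\bar{x}_0\|)$ plus noise. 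Iterating the contraction along $k,k-N,k-2N,\dots$ yields geometric decay $\rho^{\lfloor k/N\rfloor}$ of the initial-error contribution and a convergent accumulation of the noise terms; relabelling through $\alpha_1^{-1}$ then produces $\beta\in\mathcal{K}_\infty$, $\beta_0\in\mathcal{KL}$ and a summable $\mathcal{KL}$-function $\beta_v$ as required in \eqref{eq:RGES}.

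\emph{The main obstacle} I anticipate is the weight bookkeeping in the middle step: iterating \eqref{eq:Lyapunov_2} makes each interior $v_j,y_j$ appear in two consecutive brackets (as a current term and as a ``$+$'' term), so the effective weight on the stage term at horizon index $i$ is $(1+\mu)\mu^{\,i-1}$, which must be dominated \emph{exactly} by the cost's weight $2\mu^{\,i-1}$ in \eqref{eq:cost_MHE_full} so that $\Sigma$ together with the prior cross-term is genuinely absorbed into $J^\star_{N_k}$ and the leading constant comes out as precisely $2\mu^{N}$ — the value tied to \eqref{eq:MHE_full_horizon}. A secondary difficulty is showing that, after iterating the $N$-step contraction (which directly controls only the subsequence at multiples of $N$), the bound holds for \emph{all} $k$ with a true $\mathcal{KL}$ decay and that the superposed noise contributions remain summable $\mathcal{KL}$ in the index $i=k-(k-j)$ demanded by Definition~\ref{def:RGAS}.
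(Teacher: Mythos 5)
Your proposal is correct and follows essentially the same route as the paper's proof: iterating the dissipation inequality \eqref{eq:Lyapunov_2} along the true and optimal trajectories (where the interior stage weights $(1+\mu)\mu^{j-1}\leq 2\mu^{j-1}$ are indeed absorbed by the cost \eqref{eq:cost_MHE_full}, exactly as you anticipated), splitting the prior and noise terms via the weak triangle inequality, bounding the optimal cost by the feasible true-trajectory candidate with unconstrained $\hat{w}$, and converting \eqref{eq:MHE_full_horizon} into an $N$-step contraction that is iterated with $\lambda=\max(\rho^{1/N},\mu)$ to interpolate between multiples of $N$. The only cosmetic difference is that you contract $\alpha_1(e_k)$ directly while the paper contracts $V(x_k,\hat{x}^\star_{k|k})$, which is equivalent via \eqref{eq:Lyapunov_1}.
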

\begin{proof}
 Due to  strong nonlinear detectability,  we can 
invoke \eqref{eq:Lyapunov_2} and consequently its generalization \eqref{eq:lyapunov_proof_1} in the proof of Theorem~\ref{theo:Lyapunov}  to get  
\begin{align*}
	V(x_k,\hat{x}^\star_{k|k})
	&\leq \mu^{N}V(x_{k-N}, \hat{x}^{\star}_{k-N|k})+ 
	   \sum^{N}_{j=0} 2\mu^{j-1}\bigl( \sigma_v( \|v_{k-j}- \hat{v}^\star_{k-j|k} \|)  + \sigma_y (\| y_{k-j} -\hat{y}^\star_{k-j|k} \|) \bigr) \\
	& \overset{\eqref{eq:Lyapunov_1}}{\leq}\sum^{N}_{j=0} 2\mu^{j-1}\bigl( \sigma_v(2 \| \hat{v}^\star_{k-j|k}-\bar{v}_{k-j} \|) +  \sigma_v (2\| \delta v_{k-j}\|)\big) 
	   + \sum^{N}_{j=0} 2\mu^{j-1} \sigma_y(\| y_{k-j} -\hat{y}^\star_{k-j|k} \|)  \\
	&  + \mu^{N}\alpha_2(\| x_{k-N} - \hat{x}^\star_{k-N|k}+ \hat{x}^\star_{k-N|k} - \hat{x}_{k-N}\|) \\
	&\overset{\eqref{eq:prior_rule}}{\leq} \mu^{N}   \alpha_2( 2\|x_{k-N} - \hat{x}^\star_{k-N|k-N} \|)+
	  \sum^{N}_{j=0} \mu^{j-1} 2 \sigma_v( 2\|\delta v_{k-j} \|) +  J_{N}(\hat{x}^{\star}_{k-N|k}, \hat{\boldsymbol{v}}^\star_{\cdot|k}, \hat{\boldsymbol{y}}^\star_{\cdot|k}), 
\end{align*}
for  $k\in \mathbb{I}_{(N, \infty)}$, where $\delta v_{k-j}:= v_{k-j} -\bar{v}_{k-j}$. 
By optimality, the above inequality implies
\begin{equation}\label{eq:fullMHE_proof}
\begin{aligned}
   V( x_k, \hat{x}^\star_{k|k})  &\leq 	 \sum^{N}_{j=0} \mu^{j-1} 4 \sigma_v( 2\|\delta v_{k-j} \|)  
	  +\mu^{N}   \alpha_2( 2\|x_{k-N} - \hat{x}^\star_{k-N|k-N} \|)\\
	& \overset{\eqref{eq:Lyapunov_1}}{\leq} (2\mu^{N}\alpha_2 \circ  2 \alpha^{-1}_1)\bigl( V( x_{k-N},  \hat{x}^\star_{k-N|k-N} )\bigr)
	 +\sum^{N}_{j=0} \mu^{j-1} 4 \sigma_v( 2\|\delta v_{k-j} \|) \\  &\overset{\eqref{eq:MHE_full_horizon}}{\leq} \sum^{N}_{j=0} \mu^{j-1} 4 \sigma_v( 2\| \delta v_{k-j} \|) 
	  	+ \rho V( x_{k-N},  \hat{x}^\star_{k-N|k-N} ). 
\end{aligned}
\end{equation}
Similarly, for $k \in \mathbb{I}_{[0, N]}$,  we have
\begin{equation}
	\label{eq:fullMHE_proof_1}
	\begin{aligned}
		V(x_k,\hat{x}^\star_{k|k}) &\leq   \mu^{k-N} \rho V( x_{0} , \bar{x}_0 )   + \sum^{k}_{j=0} \mu^{j-1} 4 \sigma_v( 2\| \delta v_{k-j} \|).
	\end{aligned}
\end{equation}
Let $ \tau:= k- \lfloor k/N \rfloor N$ for $k\in \mathbb{I}_{[0,N]}$ and  $\lambda := \max(\rho^{1/N}, \mu) \in (0,1)$.  If $k > N$, invoking \eqref{eq:fullMHE_proof}  $ \lfloor k/N \rfloor$ times together with \eqref{eq:fullMHE_proof_1}  leads to 
\begin{align*}
V(x_k, \hat{x}^\star_{k|k})  & \leq  
\rho^{\lfloor \frac{k}{N} \rfloor}   \mu^{\tau-N} \rho V ( x_0, \bar{x}_0)  
 + \rho^{\lfloor \frac{k}{N} \rfloor}  \sum^{\tau}_{j=0}  \mu^{j-1} 4 \sigma_v ( 2 \|\delta v_{\tau-j}\|) \bigr) \\
& +\sum^{ \lfloor \frac{k}{N} \rfloor -1}_{i=0} \rho^{i}  \sum^{N}_{j=0} \mu^{j-1} 4 \sigma_v (2 \|\delta v_{k-iN-j}\|). 
\end{align*}
As $\rho^{1/N}< \lambda$, the above condition implies
\begin{align*}
 V(x_k, \hat{x}^\star_{k|k})  &\leq   \sum^{ \lfloor \frac{k}{N}\rfloor -1}_{i=0} \lambda^{iN}  \sum^{N}_{j=0} \lambda^{j-1} 4 \sigma_v ( 2\|\delta v_{k-iN-j}\|)  + \lambda^{\lfloor \frac{k}{N} \rfloor N}  \bigl( \frac{\rho \lambda^\tau}{\mu^N}  V (x_0,\bar{x}_0) + \sum^{\tau}_{j=0}  \lambda^{j-1} 4 \sigma_v (2\|\delta v_{\tau-j}\|) \bigr)\\
& \leq   \frac{\rho \lambda^k}{\mu^N} V(  x_0, \bar{x}_0) +  \lambda^{\lfloor \frac{k}{N} \rfloor N} \sum^{\tau}_{j=1}  \lambda^{j-1} 8 \sigma_v(2\|\delta v_{\tau-j}\|) + \sum^{ \lfloor \frac{k}{N} \rfloor -1}_{i=0} \lambda^{iN}  \sum^{N}_{j=1} \lambda^{j-1} 8 \sigma_v ( 2\|\delta v_{k-iN-j}\|) \\
& + \frac{4}{\lambda} \sigma_v (2\|\delta v_k \|)   \\ 
&  \leq  \frac{\rho \lambda^k}{\mu^N}  V(  x_0, \bar{x}_0) + \sum^{k}_{i=0} 8 \lambda^{i-1}  \sigma_v ( 2\| \delta  v_{k-i}\|). 	
\end{align*}
In case of  $k \in \mathbb{I}_{[0,N]}$,  the above inequality  follows readily from \eqref{eq:fullMHE_proof_1}. Therefore,   the above condition holds for all $k \in \mathbb{I}_{[0, \infty)}$.  By applying the lower and upper bounds of $V$ in \eqref{eq:Lyapunov_1} to this condition and noting  that $\lambda \in (0,1)$,  we can conclude that  $MHE_{N_k}$ is UISE according to Definition~\ref{def:RGAS}.  
\end{proof}
\begin{rem}
\label{rem:finite_N} There may not exist  $\rho \in(0,1)$ and a finite $N$ such that \eqref{eq:MHE_full_horizon} holds. Indeed, let us consider $\alpha_1(r)= r^2$ for $r\in [0, \infty)$,  $\alpha_2(r)=b r^4$ for $r\in(1, \infty)$ and $\alpha_2(r)=b r^2$ for $r\in [0,1]$ with $b>1$. Then  $(2\mu^N\alpha_2 \circ 2\alpha^{-1}_1)(r)=32\mu^N br^2 > r$ for all $r> \max((32\mu^N b)^{-1}, 1)$. 
\end{rem}
The optimization problem associated with  the  $MHE_{N_k}$ is formulated not only with past output measurements  but also with the current ones. 
As solving optimization problems can be time-consuming, this may introduce noticeable delays in state estimations in comparison with the conventional MHE, which adopts one-step ahead estimation strategy.   Nevertheless, this poses no  real limitation in some  applications such as off-line condition monitoring, where delays are not prejudicial.   

\section{Unknown input state estimators with reduced-order models} \label{sec:UISE_red}

\subsection{A tighter notion for  strong nonlinear detectability}
 To mitigate the delay of the proposed MHE-based UISE,   we will employ  reduced-order models for the MHE formulation without using current measurements. This necessitates linking strong nonlinear detectability  with detectability (or i-IOSS) of   reduced-order models. To this end, we specify additional system properties that enable us to infer strong nonlinear detectability from  detectability of  reduced-order models. 
Specially,  we first assume the following:
\begin{assum}\label{ass:compact}
The domains $\mathbb{X}$, $\mathbb{V}$ and $\mathbb{Y}$  for the state, noise and  output of the system~\eqref{eq:sys} are compact. 
\end{assum}
Next, let us consider  a diffeomorphism   $T(x)=\text{col}\big(T_\flat(x), T_\sharp(x)\big) \in \mathbb{R}^{n_\flat + n_\sharp}$ with $x \in \mathbb{R}^n$.     Performing  coordinates transformation  on \eqref{eq:sys} via $z_k:=T(x_k)=\text{col}(z_k^\flat, z_k^\sharp)$ results in 
\begin{subequations}
	\label{eq:sys_new}
	\noindent \begin{align}
	z^\flat_{k+1} &=  f_\flat(z^\sharp_k, z^\flat_k,u_k,w_k), \label{eq:sys_new_1}\\ 
	z^\sharp_{k+1} &=  f_\sharp(z^\sharp_{k}, z^\flat_{k},  u_k,w_k),    \label{eq:sys_new_2} \\
	y_k &= h_T(z^\sharp_k, z^\flat_k,   v_k),   \label{eq:sys_new_3}	
	\end{align}
\end{subequations}
	where the functions $f_\flat$, $f_\sharp$ and $h_T$ are defined by
	\begin{align*}
    &f_{\diamond}(z^\sharp_k, z^\flat_k,u_k,w_k):= T_{\diamond}\Bigl( f\big( T^{-1}(z_k),u_k, w_k\big)\Bigr), \ \diamond \in \{\flat,\sharp\}\\
    &  h_T(z^\sharp_k, z^\flat_k,   v_k):= h\big(T^{-1}(z_k),  v_k \big). 
	\end{align*}
For this transformed system~\eqref{eq:sys_new}, which is equivalent to \eqref{eq:sys}, let us define a set $\overline{\mathbb{X}}:= \{ T(x) \in \mathbb{R}^n | x\in \mathbb{X} \}$. Due to diffeomorphism and the compactness of $\mathbb{X}$, $x\in \mathbb{X} \Leftrightarrow T(x)\in \overline{\mathbb{X}}$ and $\overline{\mathbb{X}}$ is compact.   Let $\overline{\mathbb{X}}_\diamond$ be the projection of $\overline{\mathbb{X}}$ onto $z_\diamond$ with $\diamond=\{\flat, \sharp\}$, then $\overline{\mathbb{X}}  \subset \overline{\mathbb{X}}_\flat \times \overline{\mathbb{X}}_\sharp$, and both $\overline{\mathbb{X}}_\flat$ and $\overline{\mathbb{X}}_\sharp$ are compact.  We then introduce the following assumptions on \eqref{eq:sys_new}.   
\begin{assum}
\label{ass}
There exists a diffeomorphism $T=(T_\flat, T_\sharp): \mathbb{R}^n \rightarrow \mathbb{R}^{n_\flat+n_\sharp}$  such that  a continuously differentiable function $\psi: \mathbb{R}^{p}  \times \mathbb{R}^{n_\sharp} \times \mathbb{R}^{r} \rightarrow \mathbb{R}^{n_\flat}$ satisfying 
\begin{equation}
 \label{eq:ass_1}
z^\flat = \psi(y,z^\sharp, v) 
\end{equation}
for any solution $(z^\flat, z^\sharp, y,  v)$ of \eqref{eq:sys_new_3}  exists,  and  
\begin{equation}
\label{eq:ass_2}
\frac{\partial f^i_\sharp}{\partial w}(z^\sharp, z^\flat, u, w) = 0, \ \forall i\in \mathbb{I}_{[1, n_\sharp]} 
\end{equation}
with $f^i_\sharp$ denoting the $i$-th element of  $f_\sharp$ from \eqref{eq:sys_new_2}.
\end{assum}

\begin{rem}
If  $n_\flat > p$, then there exists no  $\psi$ satisfying \eqref{eq:ass_1}, since $h_T(z^\sharp, \cdot,  v)$ in \eqref{eq:sys_new_3} is not injective for any $(z^\sharp, v)$ by the Borsuk-Ulam theorem \cite{matouvsek2003}. 
Furthermore, by the global implicit function theorem  \cite[Theorem~1]{zhang2006},   the existence of  a continuously differentiable function $ \psi$ satisfying \eqref{eq:ass_1} is ensured if there exist distinct indexes $i_1, \ldots, i_{n_\flat} \in \mathbb{I}_{[1, p]} $ and a fixed $d>0$  such that 
$$
\left\|  \frac{\partial h^{i}_T}{\partial z^{\flat, i}}(z^\sharp,  z^\flat,   v) \right\|-\sum_{j\neq i} \left\| \frac{\partial h^{i}_T}{\partial z^{\flat, j}}(z^\sharp,  z^\flat,  v)  \right\| \geq d 
$$
for all $(z^\sharp,  z^\flat,  v) \in \mathbb{R}^{n_\sharp} \times \mathbb{R}^{n_\flat} \times \mathbb{R}^{r}$ and $ i\in \{ i_1, \ldots, i_{n_\flat} \}$, where $h^{i}_{T}$ and $z^{\flat, i}$ denote the $i$-th element of the function $h_T$ and the  vector $z^\flat $ respectively.  
\end{rem}

 For a class of nonlinear systems,  there is a systematic way to find $T$ satisfying Assumption~\ref{ass}. 
Specifically, consider a nonlinear system \eqref{eq:sys}, which can be transformed by a diffeomorphism $\bar{x}=\phi(x)$ into the following quasi-state-affine systems
\begin{subequations}
	\label{eq:sys_spec}
	\noindent \begin{align}
	\bar{x}_{k+1}& =f_1(\bar{x}_k,u_k)+B f_2(w_k, \bar{x}_k, u_k),   \label{eq:sys_spec_1} \\
	y_k&=C \bar{x}_k + g( v_k),    \label{eq:sys_spec_2}
	\end{align}
\end{subequations}
where $C\in \mathbb{R}^{p \times n}, B\in \mathbb{R}^{n \times s}$, and  $f_1, f_2, g$ are continuously differentiable.  

\begin{prop}
\label{prop:special_case}
If $\text{rank}(CB)=\text{rank}(B)$, then there exists  $T$ satisfying Assumption~\ref{ass}. 
\end{prop}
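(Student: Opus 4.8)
The plan is to build $T$ as the composition of the given diffeomorphism $\phi$ with an invertible \emph{linear} coordinate change that splits $\mathbb{R}^n$ along $\mathrm{Im}(B)$, mirroring the sufficiency construction in the proof of Lemma~\ref{lem:UIO_LTI}. Everything rests on one linear-algebraic fact: $\mathrm{rank}(CB)=\mathrm{rank}(B)$ is equivalent to $\ker C\cap\mathrm{Im}(B)=\{0\}$, i.e. $C$ is injective on $\mathrm{Im}(B)$. This follows from $\mathrm{rank}(CB)=\dim C(\mathrm{Im}(B))=\mathrm{rank}(B)-\dim\!\bigl(\ker C\cap\mathrm{Im}(B)\bigr)$, so equality of the two ranks forces the intersection to be trivial.

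Set $l:=\mathrm{rank}(B)$, $n_\sharp:=n-l$, and $n_\flat:=l$. First I would pick a full-row-rank matrix $Q\in\mathbb{R}^{(n-l)\times n}$ with $QB=0$; such a $Q$ exists because the left null space of $B$ has dimension $n-l$, and a dimension count then yields $\ker Q=\mathrm{Im}(B)$. Defining $T_\sharp(x):=Q\phi(x)$, the sharp dynamics become $z^\sharp_{k+1}=Q\bigl(f_1(\bar{x}_k,u_k)+Bf_2(w_k,\bar{x}_k,u_k)\bigr)=Qf_1(\bar{x}_k,u_k)$, where the unknown-input term drops out precisely because $QB=0$. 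Rewriting $\bar{x}_k=\phi\bigl(T^{-1}(z_k)\bigr)$ presents the right-hand side as a function $f_\sharp(z^\sharp_k,z^\flat_k,u_k,w_k)$ that is independent of $w_k$, establishing \eqref{eq:ass_2}.

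Next I would complete the coordinates by choosing any $M\in\mathbb{R}^{l\times n}$ for which $\bigl(\begin{smallmatrix}M\\Q\end{smallmatrix}\bigr)$ is nonsingular; one concrete choice is $M=(P^\top P)^{-1}P^\top$, where the columns of $P\in\mathbb{R}^{n\times l}$ form a basis of $\ker Q=\mathrm{Im}(B)$, since then $MP=I_l$ and $\ker M\cap\ker Q=\{0\}$. Setting $T_\flat(x):=M\phi(x)$ makes $T=\text{col}(T_\flat,T_\sharp)=\bigl(\begin{smallmatrix}M\\Q\end{smallmatrix}\bigr)\phi$ a diffeomorphism, being the composition of $\phi$ with an invertible linear map. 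To obtain $\psi$ in \eqref{eq:ass_1}, I invoke the rank condition once more: since $\ker Q=\mathrm{Im}(B)$ and $\ker C\cap\mathrm{Im}(B)=\{0\}$, the stacked matrix $\bigl(\begin{smallmatrix}C\\Q\end{smallmatrix}\bigr)$ has trivial kernel, hence full column rank $n$, and thus admits a left inverse $L$. From $y=C\bar{x}+g(v)$ together with $z^\sharp=Q\bar{x}$ I can solve $\bar{x}=L\,\text{col}\bigl(y-g(v),z^\sharp\bigr)$, so that $z^\flat=M\bar{x}=ML\,\text{col}\bigl(y-g(v),z^\sharp\bigr)=:\psi(y,z^\sharp,v)$, which is continuously differentiable because $g$ is. This verifies \eqref{eq:ass_1} and hence Assumption~\ref{ass}.

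The main obstacle is conceptual rather than computational: recognizing that the single rank condition does double duty, guaranteeing simultaneously that the sharp subsystem decouples from $w$ (through $QB=0$ with $\ker Q=\mathrm{Im}(B)$) and that the flat coordinates are reconstructible from the current output (through injectivity of $C$ on $\mathrm{Im}(B)$). Once this splitting of $\mathrm{Im}(B)$ from its annihilator is identified, the construction of $Q$, $M$, $L$ and the verification of both conditions reduce to routine linear algebra and the chain rule.
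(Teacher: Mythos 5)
Your proof is correct, and its skeleton matches the paper's: both take the sharp coordinates to be an annihilator of $\mathrm{Im}(B)$ composed with $\phi$ (your $Q$ plays the role of the paper's $M$), which makes the sharp dynamics drop the $w$-term and gives \eqref{eq:ass_2}, and both use $\mathrm{rank}(CB)=\mathrm{rank}(B)$ to recover the flat coordinates from the output, giving \eqref{eq:ass_1}. Where you diverge is in the realization of $T_\flat$ and $\psi$. The paper chooses the flat coordinates \emph{inside the output}: it row-reduces $SCB=\mathrm{col}(Q,0)$, sets $T_\flat(x)=S_\flat C\phi(x)$, and obtains $\psi(y,z^\sharp,v)=S_\flat\bigl(y-g(v)\bigr)$, which is independent of $z^\sharp$ and requires no matrix inversion at the $\psi$ stage; the price is the slightly less transparent existence argument for an $M$ with $\ker M=\mathrm{Im}(B)$ making $\mathrm{col}(S_\flat C,M)$ nonsingular (which hinges on $\ker(S_\flat CB)=\ker B$). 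You instead take a generic linear complement $M$ of the annihilator and reconstruct via a left inverse $L$ of the stacked map $\mathrm{col}(C,Q)$, whose injectivity you derive from the cleaner characterization $\ker C\cap\mathrm{Im}(B)=\{0\}$ of the rank condition; your $\psi(y,z^\sharp,v)=ML\,\mathrm{col}\bigl(y-g(v),z^\sharp\bigr)$ then genuinely depends on $z^\sharp$, which Assumption~\ref{ass} permits. Note also that your $n_\flat=\mathrm{rank}(B)$ coincides with the paper's $r=\mathrm{rank}(CB)$ precisely under the hypothesis, so the dimension counts agree. In short: same decomposition idea as in the paper (and as in the sufficiency part of Lemma~\ref{lem:UIO_LTI}), with your version buying a more elementary linear-algebraic justification and the paper's version buying a $\psi$ that uses only $(y,v)$.
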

\begin{proof}
Let us denote  $\text{rank}(CB)$ by $r$. The condition $\text{rank}(CB)=\text{rank}(B)$ implies  $n, p, s \geq r$. Furthermore,  there exists a nonsingular matrix $S\in \mathbb{R}^{p \times p}$ and a full row rank matrix $Q \in  \mathbb{R}^{r \times s}$  such that $SCB=\text{col}( Q,  \boldsymbol{0}_{(p-r) \times s })$.  Partitioning $S$ into $S=\text{col}( S_\flat, S_\sharp)$ with $S_\flat\in \mathbb{R}^{r \times p}$, we get $S_\flat CB=Q$. This along with $\text{rank}(B)=\text{rank}(Q)$ implies that there exists a full row rank matrix $M\in \mathbb{R}^{(n-r)\times n}$ such that $ \text{ker}(M) = \text{im}(B)  \subseteq \mathbb{R}^{ n }$ and  $\text{col}( S_\flat C, M )$ is nonsingular. Choosing 
\begin{equation}
	\label{eq:choice_special}
	\begin{aligned}
     &T_\flat(x)= S_\flat C \phi(x), \  T_\sharp(x)= M\phi(x),  \\
     & \psi(y, z^\sharp,  v) =  S_\flat (y-g(v)),
	\end{aligned}
\end{equation}
we can verify  that  $T=(T_\flat, T_\sharp)$  is a diffeomorphism and that conditions in  Assumption~\ref{ass} are satisfied.  
\end{proof}
Finding $\phi$, such that  output function is affine in $\phi(x)$ as in \eqref{eq:sys_spec_2},  is possible for many physical systems; see, e.g.  \cite[Section 5]{nijmeijer1990}. Moreover,  the state dynamic~\eqref{eq:sys_spec_1} is general in its form.  Therefore, the restrictions posed by  Assumption~\ref{ass} are not severe and are suited for practical applications. 

Under  Assumption~\ref{ass}, the subsystem~\eqref{eq:sys_new_2}--\eqref{eq:sys_new_3} can be  decoupled from \eqref{eq:sys_new_1} and  transformed to the following reduced-order system, 
\begin{equation}
\label{eq:sys_red}
\noindent \begin{aligned}
 z^\sharp_{k+1} & = \hat{f}_\sharp(z^\sharp_k, \gamma_k, u_k,  v_k):= f_\sharp\big(  z^\sharp_k, \psi(\gamma_k,  z^\sharp_k,  v_k), u_k, w_k \big),   \\
y_k &= \hat{h}_T(z^\sharp_k,  \gamma_k,  v_k):= h_T\big( z^\sharp_k, \psi(\gamma_k,  z^\sharp_k, v_k),  v_k \big), 
\end{aligned}
\end{equation}
with  $\psi$ satisfying \eqref{eq:ass_1} and the fictitious input $\gamma_k \in \mathbb{R}^p$ arsing from decoupling. Notably, due to the decoupling, the state and output of the reduced-order system \eqref{eq:sys_red} are not linked to those of the original system~\eqref{eq:sys} or the transformed system~\eqref{eq:sys_new}.   Yet, with slight abuse of notations,  we adopt  $z^{\sharp}_k$ and $y_k$ from the  transformed full system~\eqref{eq:sys_new} to denote the state and output of  the decoupled subsystem \eqref{eq:sys_red} to ease notation. 
Moreover,    it can not be concluded from  $y = h_T ( z^\sharp,  \psi( \gamma, z^\sharp, v),  v )$  that $y=\gamma$, since $h_T(z^\sharp, \cdot , v ): \mathbb{R}^{n_\flat} \to \mathbb{R}^p$ is not necessarily the left inverse function of $\psi ( \cdot,  z^\sharp, v ): \mathbb{R}^p \to \mathbb{R}^{n_\flat}$ for all $(z^\sharp, v)$.

In the following, we show that  strong nonlinear detectability from Definition~\ref{def:strong_detect} can be inferred from  detectability (or i-IOSS)  of the  reduced-order system~\eqref{eq:sys_red} under assumptions.

\begin{lem} \label{lem:p-eioss}
		Suppose Assumption~\ref{ass:compact} and \ref{ass} hold.  
	The system \eqref{eq:sys} is strongly nonlinearly detectable  if  the reduced-order system~\eqref{eq:sys_red} induced from system \eqref{eq:sys}  via the map $T$, which  satisfies Assumption~\ref{ass},  is i-IOSS, i.e.,  there exist $\beta \in \mathcal{K}_{\infty}$, $\beta_0 \in \mathcal{KL}$ and summable $\mathcal{KL}$-functions $\beta_v$, $\beta_y$ and $\beta_\gamma$ such that 
	\begin{equation}
		\label{eq:p-eioss}
		\begin{aligned}
			\beta(\|z^\sharp_k-\widetilde{z}^\sharp_k\|) &\leq   \sum^{k}_{i=1} \beta_v ( \| v_{k-i} - \widetilde{v}_{k-i}\|, i) + \sum^{k}_{i=1} \big(\beta_y (\|y_{k-i}- \widetilde{y}_{k-i} \|, i) +\beta_\gamma (\|\gamma_{k-i} -\widetilde{\gamma}_{k-i} \|, i ) \big) \\
			& + \beta_0 ( \| z^\sharp_0 -\widetilde{z}^\sharp_0\|, k)
		\end{aligned}
	\end{equation}
	for any $(\boldsymbol{z}^\sharp, \boldsymbol{\gamma}, \boldsymbol{u}, \boldsymbol{v},  \boldsymbol{y}), (\widetilde{\boldsymbol{z}}^\sharp, \widetilde{\boldsymbol{\gamma}},  \boldsymbol{u},  \widetilde{\boldsymbol{v}},  \widetilde{\boldsymbol{y}}) \in  \mathbb{Z}^{\infty}_{\sharp}$ satisfying \eqref{eq:sys_red} with $ \mathbb{Z}_{\sharp}:=\overline{\mathbb{X}}_\sharp \times \mathbb{Y}  \times \mathbb{U} \times \mathbb{V}  \times \mathbb{Y} $. 
\end{lem}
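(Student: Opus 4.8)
The plan is to transport the i-IOSS estimate \eqref{eq:p-eioss} of the reduced-order system back into the original coordinates, using Assumption~\ref{ass} to supply the two ingredients that are missing from \eqref{eq:p-eioss}: the algebraic reconstruction of $z^\flat$ and the independence of $f_\sharp$ from $w$. Throughout I would exploit Assumption~\ref{ass:compact}: since $T^{-1}$, $T_\sharp$ and $\psi$ are continuously differentiable while all relevant arguments $(x,\widetilde{x})\in\mathbb{X}^2$ and $(y,z^\sharp,v)\in\mathbb{Y}\times\overline{\mathbb{X}}_\sharp\times\mathbb{V}$ range over compact sets, each of these maps is uniformly continuous there and hence admits a class-$\mathcal{K}_\infty$ modulus of continuity. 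This yields three elementary bounds: $\|x_k-\widetilde{x}_k\|\le\omega_1(\|z^\flat_k-\widetilde{z}^\flat_k\|+\|z^\sharp_k-\widetilde{z}^\sharp_k\|)$ from $T^{-1}$; $\|z^\flat_k-\widetilde{z}^\flat_k\|\le\omega_2(\|y_k-\widetilde{y}_k\|+\|z^\sharp_k-\widetilde{z}^\sharp_k\|+\|v_k-\widetilde{v}_k\|)$ from \eqref{eq:ass_1}; and $\|z^\sharp_0-\widetilde{z}^\sharp_0\|\le\omega_3(\|x_0-\widetilde{x}_0\|)$ from $T_\sharp$. Note that the $z^\flat$-bound is precisely what injects the \emph{current} noise and output increments $\|v_k-\widetilde{v}_k\|$, $\|y_k-\widetilde{y}_k\|$ (the $i=0$ summand) into \eqref{eq:strong_detect}.

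The key step is to realise the two full-system trajectories as admissible trajectories of the reduced-order system~\eqref{eq:sys_red} by choosing the fictitious input equal to the true output, i.e.\ $\gamma_k:=y_k$ and $\widetilde{\gamma}_k:=\widetilde{y}_k$. Since $z^\flat_k=\psi(y_k,z^\sharp_k,v_k)$ holds along every solution of \eqref{eq:sys_new_3}, this choice makes $\hat{f}_\sharp(z^\sharp_k,y_k,u_k,v_k)=f_\sharp(z^\sharp_k,z^\flat_k,u_k,w_k)$ and $\hat{h}_T(z^\sharp_k,y_k,v_k)=y_k$, so that $z^\sharp_k=T_\sharp(x_k)$ together with the input sequence $\boldsymbol{\gamma}=\boldsymbol{y}$ (and likewise for the tilde trajectory) genuinely satisfies \eqref{eq:sys_red} and lies in $\mathbb{Z}^\infty_\sharp$; here the $w$-independence \eqref{eq:ass_2} is essential, since it removes $w$ from the reduced dynamics and hence from the final estimate, which is exactly what distinguishes strong detectability from ordinary i-IOSS. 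Applying \eqref{eq:p-eioss} to these two trajectories and using $\|\gamma_{k-i}-\widetilde{\gamma}_{k-i}\|=\|y_{k-i}-\widetilde{y}_{k-i}\|$, the $\beta_y$- and $\beta_\gamma$-terms merge into a single past-output contribution, giving a bound on $\beta(\|z^\sharp_k-\widetilde{z}^\sharp_k\|)$ in terms of $\|z^\sharp_0-\widetilde{z}^\sharp_0\|$ and past increments of $v$ and $y$ only.

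It then remains to assemble \eqref{eq:strong_detect}. I would invert $\beta\in\mathcal{K}_\infty$ to isolate $\|z^\sharp_k-\widetilde{z}^\sharp_k\|$, substitute the result into the $z^\flat$- and $x$-bounds, control $\|z^\sharp_0-\widetilde{z}^\sharp_0\|$ through $\omega_3$, and fix a convenient $\alpha\in\mathcal{K}_\infty$ on the left-hand side. The comparison functions then emerge as $\alpha_0:=\beta_0(\omega_3(\cdot),\cdot)$, together with $\alpha_v$ and $\alpha_y$ built from $\beta_v$, $\beta_y+\beta_\gamma$ and the $\psi$-moduli (the latter furnishing the $i=0$ term); summability of $\alpha_v,\alpha_y$ is inherited because composing a summable $\mathcal{KL}$-function with a $\mathcal{K}$-function on its spatial argument preserves summability, while $\alpha_0$ stays in $\mathcal{KL}$.

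I expect the main obstacle to be the composition $\chi:=\omega\circ\beta^{-1}\in\mathcal{K}_\infty$ applied to the right-hand side of \eqref{eq:p-eioss}, which is a sum whose number of terms grows with $k$: a naive use of the weak triangle inequality \cite[(6)]{jiang1994} would introduce a $k$-dependent blow-up of the arguments and spoil summability. I would circumvent this by passing through the equivalent max-formulation of i-IOSS, over which any $\mathcal{K}_\infty$-function distributes termwise without penalty, and then converting back to the sum form required by \eqref{eq:strong_detect} via $\max_i a_i\le\sum_i a_i$; alternatively, a weighted weak-triangle decomposition that exploits the monotone decay and summability of $\beta_v,\beta_y,\beta_\gamma$ together with the boundedness of the increments on the compact sets achieves the same effect while keeping the resulting $\alpha_v,\alpha_y$ summable.
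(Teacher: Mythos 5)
Your core argument coincides with the paper's: you embed the two full-system trajectories into the reduced-order system \eqref{eq:sys_red} by choosing the fictitious input $\gamma_k = y_k$ (so that $z^\sharp_k = T_\sharp(x_k)$ together with $(\boldsymbol{y},\boldsymbol{u},\boldsymbol{v},\boldsymbol{y})$ solves \eqref{eq:sys_red}), you merge $\beta_y$ and $\beta_\gamma$ into a single past-output term, and you use Assumption~\ref{ass:compact} to obtain Lipschitz-type bounds for $\psi$, $T_\sharp$ and $T^{-1}$, with the $\psi$-bound from \eqref{eq:ass_1} supplying exactly the current-time ($i=0$) contributions of $\|v_k-\widetilde{v}_k\|$ and $\|y_k-\widetilde{y}_k\|$ in \eqref{eq:strong_detect}. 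Where you diverge is the final assembly. You invert $\beta$ and push a $\mathcal{K}_\infty$-composition through the right-hand side of \eqref{eq:p-eioss}, a sum with a $k$-dependent number of terms, thereby creating the very blow-up you then must circumvent. The paper sidesteps this entirely: writing $\delta\diamond_k$ for $\diamond_k-\widetilde{\diamond}_k$, it takes as left-hand-side gain $\sigma_x(r):=\beta\bigl((3+3L_\psi)^{-1}r\bigr)$, so that after one weak-triangle split over a \emph{fixed} number of terms in $\|T(x_k)-T(\widetilde{x}_k)\|\leq (1+L_\psi)\|\delta z^\sharp_k\|+L_\psi\|\delta y_k\|+L_\psi\|\delta v_k\|$, the state term becomes exactly $\sigma_x\bigl(3(1+L_\psi)\|\delta z^\sharp_k\|\bigr)=\beta(\|\delta z^\sharp_k\|)$, into which \eqref{eq:p-eioss} is substituted \emph{additively} — no $\mathcal{K}$-function ever acts on the growing sum. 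The passage to $\|x_k-\widetilde{x}_k\|$ is then a linear rescaling via $L_{T^{-1}}^{-1}\|x-\widetilde{x}\|\leq\|T(x)-T(\widetilde{x})\|\leq L_T\|x-\widetilde{x}\|$, and summability of the resulting $\alpha_v,\alpha_y$ is immediate since only the $i=0$ terms are modified (by a max with $\sigma_x(3L_\psi\, \cdot)$).

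One concrete soft spot in your workaround: after passing to the max-formulation, the noise and output contributions take the form $\gamma_v(\|\delta\boldsymbol{v}\|_{[0,k-1]})$, i.e.\ a $\mathcal{K}$-function of a supremum over the whole window with \emph{no} time argument, and the inequality $\max_i a_i\leq\sum_i a_i$ cannot reintroduce the decaying weights required for the summable-$\mathcal{KL}$ sum structure of Definition~\ref{def:strong_detect}. Restoring the time decay needs the heavier machinery of Proposition~\ref{prop:conv_max} followed by Sontag's factorization (Proposition~\ref{prop:KL_lemma}) — exactly the route the paper reserves for the necessity part of Theorem~\ref{theo:Lyapunov}, and which is wholly unnecessary here. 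Your alternative weighted-weak-triangle suggestion exploiting boundedness on compact sets is plausible but left unexecuted; with the $\sigma_x$-rescaling trick above the entire issue disappears.
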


\begin{proof}
		In the proof, let us abbreviate $\diamond_k - \widetilde{\diamond}_k$ by $\delta \diamond_k$ for $k\in \mathbb{I}_{[0, \infty)}$ and $\diamond \in \{z^\sharp, v, y, \}\}$.  
	Consider solutions $(\boldsymbol{x}, \boldsymbol{u}, \boldsymbol{w}, \boldsymbol{v}, \boldsymbol{y} )$,  $(\widetilde{\boldsymbol{x}}, \boldsymbol{u}, \widetilde{\boldsymbol{w}}, \widetilde{\boldsymbol{v}}, \widetilde{\boldsymbol{y}} )\in \mathbb{Z}^\infty$ of \eqref{eq:sys}. Let $ z^\sharp_k = T_{\sharp}(x_k)$ and   $ \widetilde{z}^\sharp_k = T_{\sharp}(\widetilde{x}_k)$, then  $(\boldsymbol{z}^\sharp, \boldsymbol{y}, \boldsymbol{u}, \boldsymbol{v},  \boldsymbol{y}), (\widetilde{\boldsymbol{z}}^\sharp, \widetilde{\boldsymbol{y}},  \boldsymbol{u},  \widetilde{\boldsymbol{v}},  \widetilde{\boldsymbol{y}}) \in \mathbb{Z}^\infty_{\sharp}$ are the solution of \eqref{eq:sys_red}. Hence, by invoking \eqref{eq:p-eioss} and $\|T_\sharp(x) \| \leq \| T(x)  \|$ and defining $\bar{\beta}_{y}:= \beta_y + \beta_\gamma$, we get 
	\begin{equation}
		\label{eq:p-eioss_proof_1} 
		\begin{aligned}
			&\beta (\|z^\sharp_k-\widetilde{z}^\sharp_k\|) \leq  \beta_0 ( \| T(x_0) -T(\widetilde{x}_0)\|, k)  \\
			& + \sum^{k}_{i=1} \big( \beta_v (\| \delta v_{k-i} \|, i) + \bar{\beta}_y (\| \delta y_{k-i} \|, i ) \big).
		\end{aligned} 
	\end{equation}
	Since $\mathbb{Y} \times \overline{\mathbb{X}}_{\sharp} \times \mathbb{V}$ is compact in view of  Assumption~\ref{ass:compact}, the continuously differentiable function~$\psi$ is Lipschitz continuous. Let us denote its Lipschitz constant by $L_{\psi}$.   
	From the condition \eqref{eq:ass_1}  and by recalling that $z_k^\flat = T_\flat(x_k)$, we arrive at 
	\begin{equation}
		\label{eq:p-eioss_proof_2}  
		\|T_\flat(x_k) - T_\flat(\widetilde{x}_k )\|\leq L_{\psi} (\|\delta y_k \|+\|\delta z^\sharp_k \| + \|\delta v_k\|).
	\end{equation}
	Let us define $\sigma_x(r):= \beta( (3+3L_\psi)^{-1} r) \in \mathcal{K}_{\infty}$ for $r\in [0, \infty)$.
	By exploiting this along with  \eqref{eq:p-eioss_proof_2} and the weak triangular inequality \cite[Equation.~6]{jiang1994},  we get 	
	\begin{equation}
		\label{eq:p-eioss_proof_3}
		\begin{aligned}
			&\sigma_x( \|T(x_k)-T(\widetilde{x}_k)\|)  \leq \sigma_x \big(3(1+L_\psi) \| \delta z^\sharp_k\|\big) \\
			&+ \sigma_x (3L_\psi\|\delta y_k \| ) + \sigma_x (3L_\psi\|\delta v_k\|) \\ 
			& \overset{\eqref{eq:p-eioss_proof_1}}{\leq} \sum^{k}_{i=0} \sigma_v ( \| \delta v_{k-i}\|, i) + \sigma_y ( \|\delta y_{k-i} \|, i) \\
			& + \beta_0 (\| T(x_0) -T(\widetilde{x}_0)\|, k),  
		\end{aligned}
	\end{equation}
	where  $\sigma_v(r,0):= \max(\beta_v(r, 0), \sigma_x(3L_\psi r) )$,  $\sigma_y(r,0):= \max(\bar{\beta}_y(r, 0), \sigma_x(3L_\psi r) )$  and $\sigma_v(r,i):= \beta_v(r,i)$ as well as $\sigma_v(r,i):= \bar{\beta}_y(r,i)$ for $i\in \mathbb{I}_{(0, \infty)}$, $r\in [0, \infty)$. Note that  $\sigma_v$, $\sigma_y$ are both summable $\mathcal{KL}$-functions. Finally, let $L_T$ and $L_{T^{-1}}$ be the Lipschitz constants of the diffeomorphism $T$ and its inverse on the compact sets $\mathbb{X}$ and $\overline{\mathbb{X}}$ respectively. Applying $L_{T^{-1}}^{-1}\|x-\widetilde{x}\|\leq \|T(x)-T(\widetilde{x})\| \leq L_T\|x-\widetilde{x}\|$ to \eqref{eq:p-eioss_proof_3} leads to  \eqref{eq:strong_detect}.   
\end{proof}

\begin{rem}
		For the  linear time-invariant system~\eqref{eq:UIO_LTI}, Lemma~\ref{lem:p-eioss} holds also with  ``only if". To see this, we first note that  the conditions~(i) and (ii) in Lemma~\ref{lem:UIO_LTI} hold due to strong nonlinear detectability.   We then follow the same line of reasoning as in the proof of Proposition~\ref{prop:special_case} to show that the condition~(ii)  implies the existence of linear maps $T_\flat $, $T_\sharp$  and  $\psi$  satisfying the conditions in Assumption~\ref{ass}. Hence,  the transformed reduced-order system~\eqref{eq:sys_red} is linear and independent of $v_k$, $w_k$ and $u_k$. Moreover,  $T_{\sharp}(x_k)$ and $y_k$ associated with the  system~\eqref{eq:UIO_LTI} coincide with the state and  output of the reduced-order system~\eqref{eq:sys_red} driven by   $\gamma_k=y_k$ and the same input $u_k$ for all $k\in \mathbb{I}_{[0, \infty)}$.   These insights together with  the condition~(i) indicate that any  trajectory $(\boldsymbol{z}^\sharp, \boldsymbol{\gamma},  \boldsymbol{y})$   of  the system~\eqref{eq:sys_red} with $\gamma_k=y_k=0$  for all $k\in \mathbb{I}_{[0,\infty)}$ satisfy $\lim_{k\to\infty}\|z^\sharp_k\|= 0$. This implies that  the state matrix of the system \eqref{eq:sys_red} is Schur stable, thereby indicating that the  system~\eqref{eq:sys_red} is i-IOSS in the sense of \eqref{eq:p-eioss}. 
\end{rem}

\subsection{Two-stage MHE-based UISE }
 Drawing upon the preconditions in  Lemma~\ref{lem:p-eioss} and the corresponding reduced-order model~\eqref{eq:sys_red}, we construct a two-stage MHE-based UISE, that is  computationally efficient and yields reduced estimation delays  in comparison with the MHE-based UISE designed with full-order models. Specifically, the state estimation proceeds at each time  $k \in \mathbb{I}_{\geq 1}$ via two stages: First,  an MHE provides a one-step ahead estimate $\hat{z}^\sharp_{k}$ of the  state in the reduced-order model~\eqref{eq:sys_red};  Second,  after the current  output measurement $y_{k}$ is available, the full current state estimate $\hat{x}_{k}$ is computed by 
 \begin{equation}
 \label{eq:fullest}
 \hat{x}_k = T^{-1}\left( \text{col}\bigl(\psi(y_k,  \hat{z}^\sharp_k,  \bar{v}_k ), \hat{z}^\sharp_k \bigr)  \right). 
 \end{equation} 
 For the estimation of $z^\sharp_k$, we design an MHE considering past inputs $\boldsymbol{u}_{[k-N_k, k-1]}$  and outputs $ \boldsymbol{y}_{[k-N_k, k-1]}$,  priors $\bar{x}_0\in \mathbb{X}$ and  $\bar{\boldsymbol{v}}_{[k-N_k, k-1]}$ as well as the past estimate $\hat{z}^\sharp_{k-N_k}$ within the horizon $N_k:= \min(k, N)$,  $N\in \mathbb{I}_{\geq 1}$. 
To ensure the existence of a finite stabilizing prediction horizon $N$, cf. Remark~\ref{rem:finite_N},  we impose the following stronger assumption on the reduced-order system~\eqref{eq:sys_red}, requiring  exponential i-IOSS rather than  i-IOSS. 
\begin{assum}\label{ass:e-ioss}
The reduced-order system~\eqref{eq:sys_red} derived through   the map $T$ from Assumption~\ref{ass} satisfies
\begin{equation}
	\label{eq:p-eioss_quadratic}
	\begin{aligned}
		 \|z^\sharp_k-\hat{z}^\sharp_k\|^2   &\leq  \sum^{k-1}_{i=0} \mu^{k-i} ( c_v \| v_{i} - \hat{v}_{i}\|^2 + c_y \|y_{i}- \hat{y}_{i} \|^2) 
		+ \sum^{k-1}_{i=0} \mu^{k-i} c_\gamma \|\gamma_{i} -\hat{\gamma}_{i} \|^2 +c_x\mu^k \| z^\sharp_0 -\hat{z}^\sharp_0\|^2
	\end{aligned}
\end{equation}
for any trajectories $(\boldsymbol{z}^\sharp, \boldsymbol{\gamma}, \boldsymbol{u}, \boldsymbol{v},  \boldsymbol{y}), (\hat{\boldsymbol{z}}^\sharp, \hat{\boldsymbol{\gamma}},  \boldsymbol{u},  \hat{\boldsymbol{v}},  \hat{\boldsymbol{y}}) \in  \mathbb{Z}^{\infty}_{\sharp}$ of \eqref{eq:sys_red} with $ \mathbb{Z}_{\sharp}:=\overline{\mathbb{X}}_\sharp \times \mathbb{Y}  \times \mathbb{U} \times \mathbb{V}  \times \mathbb{Y} $.
\end{assum}
Under Assumption~\ref{ass:e-ioss}, the MHE optimizes the cost 
\begin{equation}
	\label{eq:cost_MHE}
	\begin{aligned}
			 J^\sharp(\hat{z}^\sharp_{k-N_k|k}, \hat{\boldsymbol{v}}_{\cdot|k}, \hat{\boldsymbol{y}}_{\cdot|k}) & = 
		     \sum^{N_k}_{j=1}\mu^{j} ( 2c_v \|\hat{v}_{k-j|k}-\bar{v}_{k-j}\|^2 + c_y \| y_{k-j}-\hat{y}_{k-j|k}\|^2)\\
            & +\mu^{N_k} 2c_x \|\hat{z}^\sharp_{k-N_k|k}- \hat{z}^\sharp_{k-N_k}\|^2,  
	\end{aligned}
\end{equation}
with constants $\mu, c_x, c_y, c_v$  over variables $z^\sharp_{k-N_k|k}$,  $\hat{\boldsymbol{v}}_{\cdot|k}:=\bigl( \hat{v}_{j|k} \bigr)^{k-1}_{j=k-N_k}$ and $\hat{\boldsymbol{y}}_{\cdot|k}:=\bigl( \hat{y}_{j|k}\bigr)^{k-1}_{j=k-N_k}$.
The estimate $\hat{z}^\sharp_k$ for $z^\sharp_k$ is computed by 
\begin{equation}
\label{eq:prior}
	\hat{z}^\sharp_0= T_\sharp(\bar{x}_0) \ \text{and} \ \hat{z}^\sharp_k = \hat{z}^{\sharp,\star}_{k|k}, \ k\in \mathbb{I}_{\geq 1}, 	
\end{equation} 
where $\hat{z}^{\sharp,\star}_{k|k}$ is a minimizer to the following optimization problem
\begin{equation}
	\label{eq:op_MHE}
	\begin{aligned}
		\min_{\hat{z}^\sharp_{k-N_k|k}, \hat{\boldsymbol{v}}_{\cdot|k}} & \  J^\sharp(\hat{z}^\sharp_{k-N_k|k}, \hat{\boldsymbol{v}}_{\cdot|k}, \hat{\boldsymbol{y}}_{\cdot|k})     \\
	\text{s.t.}  \ &  \hat{z}^\sharp_{j+1|k} =  \hat{f}_\sharp(\hat{z}^\sharp_{j|k}, y_j, u_j, \hat{v}_{j|k}),  \\
		&  \hat{y}_{j|k} = \hat{h}_T(\hat{z}^\sharp_{j|k}, y_j,  \hat{v}_{j|k}), \ \hat{z}^\sharp_{k|k} \in \overline{\mathbb{X}}_\sharp,  \\
	&  \text{col}( \psi(y_j,  \hat{z}^\sharp_{j|k},  \hat{v}_{j|k}),  \hat{z}^\sharp_{j|k}) \in \overline{\mathbb{X}},   \\
	& \hat{v}_{j|k} \in \mathbb{V},  \ \hat{y}_{j|k}\in \mathbb{Y}, \ j \in \mathbb{I}_{[k-N_k, k-1]}.   
	\end{aligned}
\end{equation}
The functions $\hat{f}_\sharp$ and $\hat{h}_T$ in \eqref{eq:op_MHE} are specified in \eqref{eq:sys_red}. Further,   the function  $\psi$ satisfies the conditions in Assumption~\ref{ass}. 
Recall that $\overline{\mathbb{X}}$ is homeomorphic to $\mathbb{X}$ and $\overline{\mathbb{X} }_\flat \times \overline{\mathbb{X}}_\sharp \supset \overline{\mathbb{X}}$.  Thus,   the optimization problem~\eqref{eq:op_MHE} is always feasible. In contrast to the UISE presented in Section~\ref{sec:MHE_full}, the current measurements are only required in the mapping \eqref{eq:fullest} but not in  the optimization problem \eqref{eq:op_MHE}, which alleviate  delays in the estimation.  

In the following, we analyze the stability of  the proposed two-stage estimation scheme comprising \eqref{eq:fullest} and \eqref{eq:op_MHE}. 
\begin{thm}
\label{theo:stab}
	Suppose Assumption~\ref{ass} and~\ref{ass:e-ioss} apply. Let the prediction horizon  $N$ in \eqref{eq:op_MHE} satisfy 
	\begin{equation}
	\label{eq:N_cond}
	N >-\log_{\mu} (4c_x). 
	\end{equation} 
Then the estimation scheme consisting of   \eqref{eq:fullest},  \eqref{eq:op_MHE} constitutes an UISE.
\end{thm}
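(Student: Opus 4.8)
The plan is to prove the claim in two parts: first establish a convergent bound on the reduced-state error $\|z^\sharp_k - \hat{z}^\sharp_k\|$ produced by the MHE~\eqref{eq:op_MHE}, and then transfer this bound to the full-state error $\|x_k - \hat{x}_k\|$ through the reconstruction map~\eqref{eq:fullest}. The central observation enabling the first part is that both the true reduced trajectory and the MHE optimal trajectory are driven by the \emph{same} fictitious input, namely the actual measurement $\gamma_j = y_j$. Indeed, by~\eqref{eq:ass_1}--\eqref{eq:ass_2} the true $z^\sharp_k = T_\sharp(x_k)$ satisfies $z^\sharp_{k+1} = \hat{f}_\sharp(z^\sharp_k, y_k, u_k, v_k)$ and reproduces $y_k$, so $(\boldsymbol{z}^\sharp, \boldsymbol{y}, \boldsymbol{u}, \boldsymbol{v}, \boldsymbol{y})$ is an admissible solution of~\eqref{eq:sys_red} in $\mathbb{Z}^\infty_\sharp$, while the optimal trajectory uses $\hat{\gamma}_j = y_j$ by the constraints in~\eqref{eq:op_MHE}. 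Consequently, when Assumption~\ref{ass:e-ioss} is applied over the horizon window $[k-N_k, k]$ to these two trajectories, the $c_\gamma$ term vanishes, leaving only the noise and output-mismatch terms together with the contribution of the initial mismatch $\|z^\sharp_{k-N_k} - \hat{z}^\sharp_{k-N_k|k}\|$.

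Next I would replay the contraction argument of Theorem~\ref{theo:MHE_full} with the quadratic storage $V = \|\cdot\|^2$. Splitting $\|v_{k-j} - \hat{v}^\star_{k-j|k}\|^2 \le 2\|v_{k-j} - \bar{v}_{k-j}\|^2 + 2\|\bar{v}_{k-j} - \hat{v}^\star_{k-j|k}\|^2$ and $\|z^\sharp_{k-N_k} - \hat{z}^\sharp_{k-N_k|k}\|^2 \le 2\|z^\sharp_{k-N_k} - \hat{z}^\sharp_{k-N_k}\|^2 + 2\|\hat{z}^\sharp_{k-N_k} - \hat{z}^\sharp_{k-N_k|k}\|^2$ makes the optimal cost $J^\sharp(\hat{z}^{\sharp,\star}_{k-N_k|k}, \hat{\boldsymbol{v}}^\star_{\cdot|k}, \hat{\boldsymbol{y}}^\star_{\cdot|k})$ appear on the right-hand side; by optimality this is bounded by the cost of the admissible true trajectory, on which the output mismatch is zero. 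Using the prior rule~\eqref{eq:prior}, this yields the $N$-step contraction
\begin{equation*}
\|z^\sharp_k - \hat{z}^\sharp_k\|^2 \le 4c_x\mu^{N}\|z^\sharp_{k-N} - \hat{z}^\sharp_{k-N}\|^2 + 4c_v\sum_{j=1}^{N}\mu^{j}\|v_{k-j} - \bar{v}_{k-j}\|^2
\end{equation*}
for $k > N$, with the analogous one-window estimate anchored at $\hat{z}^\sharp_0 = T_\sharp(\bar{x}_0)$ for $k \le N$. The horizon condition~\eqref{eq:N_cond} is exactly $\rho := 4c_x\mu^{N} \in (0,1)$, so iterating the contraction $\lfloor k/N\rfloor$ times and setting $\lambda := \max(\rho^{1/N}, \mu) \in (0,1)$ gives, as in the telescoping step of Theorem~\ref{theo:MHE_full}, a bound of the form $\|z^\sharp_k - \hat{z}^\sharp_k\|^2 \le c\,\lambda^k\|z^\sharp_0 - \hat{z}^\sharp_0\|^2 + \sum_{i=0}^{k} c'\lambda^i\|v_{k-i} - \bar{v}_{k-i}\|^2$.

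Finally I would transfer this to the full state. Since $T^{-1}$ and $\psi$ are Lipschitz on the relevant compact sets (Assumption~\ref{ass:compact}) with constants $L_{T^{-1}}, L_\psi$, and since the true $z^\flat_k = \psi(y_k, z^\sharp_k, v_k)$ while the reconstruction~\eqref{eq:fullest} uses $\hat{z}^\flat_k = \psi(y_k, \hat{z}^\sharp_k, \bar{v}_k)$ with the \emph{same} measured $y_k$, I obtain $\|x_k - \hat{x}_k\| \le L_{T^{-1}}\bigl((1+L_\psi)\|z^\sharp_k - \hat{z}^\sharp_k\| + L_\psi\|v_k - \bar{v}_k\|\bigr)$. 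Taking square roots in the reduced-state bound, applying the weak triangular inequality~\cite[(6)]{jiang1994}, using $\|z^\sharp_0 - \hat{z}^\sharp_0\| \le L_T\|x_0 - \bar{x}_0\|$, and absorbing the extra current-noise term $L_{T^{-1}}L_\psi\|v_k - \bar{v}_k\|$ into the $i=0$ summand of the noise sum yields an estimate of the form~\eqref{eq:RGES} with $\beta \in \mathcal{K}_\infty$, $\beta_0 \in \mathcal{KL}$ and a summable $\beta_v \in \mathcal{KL}$ (the decaying factors $\lambda^i$ guarantee summability). Hence the two-stage scheme is a UISE.

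I expect the main obstacle to be the bookkeeping of constants across the square-root and weak-triangle steps while preserving the $\mathcal{KL}$ and summability structure demanded by Definition~\ref{def:RGAS}; the decoupling $\gamma = y$ and the Lipschitz transfer are conceptually straightforward but must be stated carefully so that only the \emph{current} noise, and crucially no unknown-input term, enters the final bound.
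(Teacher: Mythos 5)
Your proposal is correct and follows essentially the same route as the paper's proof: the decisive observation that both the true reduced trajectory and the MHE candidate are driven by the same fictitious input $\gamma_j = y_j$ (so the $c_\gamma$ term in \eqref{eq:p-eioss_quadratic} vanishes and the true trajectory is feasible with zero output mismatch), the splitting-plus-optimality step yielding the $N$-step contraction $\|z^\sharp_k-\hat{z}^\sharp_k\|^2 \leq 4c_x\mu^{N}\|z^\sharp_{k-N}-\hat{z}^\sharp_{k-N}\|^2 + 4c_v\sum_{j=1}^{N}\mu^{j}\|v_{k-j}-\bar{v}_{k-j}\|^2$, and the Lipschitz transfer through \eqref{eq:fullest} all coincide with the paper's argument (cf.\ \eqref{eq:rges_proof_3}--\eqref{eq:rges_full_1}). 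The only cosmetic deviations are that you telescope in squared norms and take square roots afterwards, whereas the paper takes roots before iterating, and that you invoke the weak triangle inequality where subadditivity of the square root already suffices.
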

\begin{proof}
	In the proof, we first analyze bounds on $T_\sharp(x_k)-\hat{z}^\sharp_{k}$ with  $\hat{z}^\sharp_{k}$ from  MHE~\eqref{eq:op_MHE} by following the similar reasoning as in   the proof of Theorem~\ref{theo:MHE_full} and then derive the estimation error of \eqref{eq:fullest}.  
	
	Recall that any trajectory $(\boldsymbol{x}, \boldsymbol{u}, \boldsymbol{w}, \boldsymbol{v}, \boldsymbol{y}) \in \mathbb{Z}^\infty$  of \eqref{eq:sys} induces a solution $(\boldsymbol{z}^{\sharp}, \boldsymbol{y}, \boldsymbol{u}, \boldsymbol{v}, \boldsymbol{y})  \in \mathbb{Z}^\infty_{\sharp}$  with $\boldsymbol{z}^{\sharp}= \bigl( T_\sharp(x_k)    \bigr)^{\infty}_{k=0}$  satisfying \eqref{eq:sys_red}. 
	Furthermore, due to the constraints in \eqref{eq:op_MHE},    $(\hat{z}_{j|k}^{\sharp,  \star}, y_j, u_j, \hat{v}^\star_{j|k}, \hat{y}^\star_{j|k}) \in \mathbb{Z}_{\sharp}$ is also the solution of \eqref{eq:sys_red} for $j\in \mathbb{I}_{[k-N_k, k-1]}$ at each time $k$. 
	Thus,  we can apply \eqref{eq:p-eioss_quadratic} successively together with \eqref{eq:cost_MHE}, \eqref{eq:prior} and $(a+b)^2 \leq 2a^2+2b^2$  to obtain 
	\begin{align*}
			&\|T_\sharp(x_k)-\hat{z}^\sharp_k\|^2
		\leq  \mu^{N_k} 2 c_x \| T_\sharp(x_{k-N_k}) - \hat{z}^{\sharp}_{k-N_k} \|^2
		 + \sum^{N_k}_{j=1} \mu^{j} 2 c_v \| v_{k-j} -\bar{v}_{k-j}\|^2 + J^\sharp(\hat{z}^{\sharp, \star}_{k-N_k | k}, \hat{\boldsymbol{v}}^\star_{\cdot | k}, \hat{\boldsymbol{y}}^\star_{\cdot|k}). 		
	\end{align*}
\noindent  From this condition,  we obtain  
\begin{equation}
	\label{eq:rges_proof_3}
	\begin{aligned}
		&\|T_\sharp(x_k)-\hat{z}^\sharp_k\|^2 \leq  \mu^{N_k} 4 c_x \| T_\sharp(x_{k-N_k}) - \hat{z}^{\sharp}_{k-N_k} \|^2 
	 + \sum^{N_k}_{j=1} \mu^{j} 4 c_v \| v_{k-j} -\bar{v}_{k-j}\|^2
	\end{aligned}
\end{equation}
by exploiting the  optimality of $J^\sharp$. 
Let us define $\rho:= \eta \hat{c}^{1/N}_x $,  $\tau:= k- \lfloor k/N \rfloor N$with $\eta:=\sqrt{\mu}$,  $\hat{c}_x:=2\sqrt{c_x}$ and $\hat{c}_v:=2\sqrt{c_v}$. It follows from \eqref{eq:rges_proof_3} and $\|a\|^2+\|b\|^2\leq(\|a\|+\|b\|)^2$ that 
\begin{equation}
	\label{eq:rges_proof_4}
	\begin{aligned}
		 \|T_\sharp(x_k)-\hat{z}^\sharp_k\|  & \leq 
		\rho^{ \lfloor k/N \rfloor N}   \eta^\tau \hat{c}_x \| T_\sharp(x_0)-\hat{z}^\sharp_0 \| 
	 +  \rho^{ \lfloor k/N \rfloor N} \sum^{\tau}_{j=1} \eta^j\hat{c}_v \|v_{\tau-j}-\bar{v}_{\tau-j}\| \\
			& +\sum^{\lfloor k/N \rfloor-1}_{i=0} \rho^{iN}  \sum^{N}_{j=1}\eta^j \hat{c}_v \|v_{k-iN-j}-\bar{v}_{k-iN-j}\| \\
	& \leq  \lambda^k \hat{c}_x \| T_\sharp(x_0)-\hat{z}^\sharp_0 \| + \sum^{k}_{i=1}  \lambda^i  \hat{c}_v \| v_{k-i}-\bar{v}_{k-i}\|,
	\end{aligned}
\end{equation}
where $\lambda := \max(\rho, \eta) < 1$ in view of \eqref{eq:N_cond} and $\mu \in (0, 1)$. 
Let $L_{T_\sharp}$ and $L_{\psi}$ be the Lipschitz constant of $T_\sharp$ and  $\psi$ on compact sets $\mathbb{X}$ and $\mathbb{Y}  \times \overline{\mathbb{X}}_\sharp \times \mathbb{V}$ respectively.
We obtain 
\begin{equation}
	\label{eq:rges_full_1}
\begin{aligned}
	\|T(x_k)-T(\hat{x}_k)\| 
& \overset{\eqref{eq:fullest}}{\leq} \|T_\flat(x_k)-\psi(y_k,  \hat{z}^\sharp_k,  \bar{v}_k)\| 
  + \|T_\sharp(x_k)-\hat{z}^\sharp_k \| \\
&\leq \|\psi(y_k,  T_\sharp(x_k), v_k)-\psi(y_k,  \hat{z}^\sharp_k, \bar{v}_k)  \| + \|T_\sharp(x_k)-\hat{z}^\sharp_k \| \\
&\leq   L_{\psi} \|v_k-\bar{v}_k\|    + \widetilde{L}_\psi \|T_\sharp(x_k)-\hat{z}^\sharp_k \| \\
& \overset{\eqref{eq:rges_proof_4} }{\leq} L_{\psi} \|v_k-\bar{v}_k\| 
+ \widetilde{L}_\psi  \lambda^k \hat{c}_x L_{T_\sharp}\| x_0- \bar{x}_0\| 
+ \widetilde{L}_\psi \sum^{k}_{i=1}  \lambda^i  \hat{c}_v \| v_{k-i}-\bar{v}_{k-i}\|,  
\end{aligned}
\end{equation}
where $\widetilde{L}_{\psi}:= 1+ L_\psi$. 
Let $L_{T^{-1}}$ be the Lipschitz constant of $T^{-1}$ on some compact set. It follows from \eqref{eq:rges_full_1} that
\begin{align*}
	\|x_k-\hat{x}_k\| & \leq L_{T^{-1}} L_{\psi}\|v_k-\bar{v}_k\| 
+ \widetilde{L}_{\psi}  L_{T^{-1}} (  \lambda^k \hat{c}_x L_{T_\sharp}\| x_0- \bar{x}_0\| + \sum^{k}_{i=1}  \lambda^i  \hat{c}_v  \| v_{k-i}-\bar{v}_{k-i}\|)\\
& \leq \lambda^k C_0 \| x_0- \bar{x}_0\| +   \sum^{k}_{i=0}  \lambda^i  C_v \| v_{k-i}-\bar{v}_{k-i}\| 
\end{align*}
with $C_v:=L_{T^{-1}} \max( L_{\psi}, (1+ L_{\psi})\hat{c}_v)$ and $C_0:= (1+ L_{\psi})\hat{c}_x L_{T^{-1}} L_{T_\sharp}$. This leads to \eqref{eq:RGES}, thereby showing that the estimation scheme is UISE according to Definition~\ref{def:RGAS}. 
	\end{proof}
It is noteworthy that  $\hat{x}_k$ computed from \eqref{eq:fullest} might be outside of $\mathbb{X}$. To ensure $\hat{x}_k \in \mathbb{X}$ for all $k$, we may project the transformed estimate $T(\hat{x}_k)$ onto $\overline{\mathbb{X}}$ via a projection function $\mathrm{proj}_{\overline{\mathbb{X}}}$. More specifically, \eqref{eq:fullest} is modified to 
\begin{equation}
	\label{eq:fullest_new}
		\hat{x}_k = T^{-1}\biggl( \mathrm{proj}_{\overline{\mathbb{X}}} \Bigl( \mathrm{col}\bigl(\psi(y_k,  \hat{z}^\sharp_k,  \bar{v}_k ), \hat{z}^\sharp_k \bigr) \Bigr)  \biggr).
\end{equation}
As a result, an  additional error term resulted from projection appears in upper bounds of state estimation errors in \eqref{eq:RGES}. However, when $\|v_k-\bar{v}_k\|\rightarrow 0$ as $k\rightarrow 0$, we have $\|T_\sharp(x_k)-\hat{z}^\sharp_k\|\rightarrow 0$ in virtue of  \eqref{eq:rges_proof_4}, which together with the Lipschitz continuity of $\psi$ implies that $ \text{col}\bigl(\psi(y_k,  \hat{z}^\sharp_k,  \bar{v}_k), \hat{z}^\sharp_k \bigr) \to T(x_k) \in \overline{\mathbb{X}}$. Therefore,  the projection error vanishes with vanishing noises $v_k$.  

\begin{rem}
If $n^\sharp=0$, i.e., $n^\flat = n$, then \eqref{eq:fullest} reduces to $\hat{x}_k=T^{-1}\bigl( \mathrm{col} ( \psi(y_k, \bar{v}_k) ) \bigr)$. 
Thus, $\hat{x}_k$ can be constructed solely by \eqref{eq:fullest}. By Lipschitz continuity, one can immediately get that the estimation error is bounded by $\|v_k -\bar{v}_k\|$, which shows that \eqref{eq:fullest} itself is a UISE. 
If $n^\sharp=n$, i.e., $n^\flat = 0$, then \eqref{eq:fullest} reduces to $\hat{x}_k= T^{-1}(\hat{z}^\sharp_k)$, which indicates that the current output measurement is not required to estimate the current state.
\end{rem}

\section{Numerical example}\label{sec:sim}
To illustrate the benefits of our estimation scheme, we implement the proposed MHE-based UISE in Theorem~\ref{theo:MHE_full},  the two-stage UISE in Theorem~\ref{theo:stab}, and   the standard MHE scheme  from \cite{schiller2023}  on a simplified  crop-growth process in indoor farms considered in \cite[Section~IV]{Guo2024}.  The model is given by
\begin{align*}
	x^c_{k+1} &= x^c_k +\Delta_t a_{c1} (f(x_k^d)-w_k)\phi(u^d_k, x^c_k) 
	 + \Delta_t a_{c2}\sum_{j=1}^{2}  g_j(x_k^d, u_k^d) ,\\
x^{d,1}_{k+1}&=x^{d,1}_{k}-\Delta_t \left(a_{d1}f(x_k^d) \phi(u^d_k, x^c_k) +a_{d2} g_1(x_k^d, u_k^d) \right),   \\
x^{d,2}_{k+1}&=x^{d,2}_{k}+\Delta_t \left(a_{d3}\phi(u^d_k, x^c_k)w_k- a_{d4} g_2(x^d_k, u^d_k)  \right),  \\
y^c_k &= x^c_k +v^c_k, \ y^d_k=x^{d,1}_k+x^{d,2}_k +v^d_k, 
\end{align*}
with $f(x^d):= e^{-\xi_1 x^{d,1}_k}-1$, $g_j(x^d, u^d):=x^{d,j}_k 2^{0.1u^d_k-2.5}$ and the sampling time $\Delta_t= 60$ sec. The photosynthesis rate $\phi$ is defined by 
\begin{equation*}
\phi( u^d, x^c)= \dfrac{ 100 c_{rad,phot}  (x^c-c_{\Gamma})\phi_u(u^d)}{
	100 c_{rad,phot}
	+ (x^c-c_{\Gamma})\phi_u(u^d)}
\end{equation*}
with $\phi_u (u^d) = -c_{CO_{2,1}} (u^d)^2+ c_{CO_{2,2}} u^d-c_{CO_{2,3}}$. 
The state $x= (x^c, x^{d,1}, x^{d,2})$ consisting of  the $CO_2$ concentration and the weight of two types of crops  is known to evolve in $\mathbb{X}=[0, 0.0027] \times [0.08, 0.1] \times [0.08, 0.1]$. The temperature $u^d$ is kept at 25 $^{\circ}$C.   The measurement noise $v= (v^c,  v^d)$ is a uniformly distributed random variable satisfying $|v^c|\leq 3 \times 10^{-6}$ and $\|v^d\|\leq 3 \times 10^{-6}$.  The unknown input $w_k$ models the output of the unknown function $W(x^{d,2})=1-e^{-45 x^{d,2}_k}$.  The range of unknown input  is supposed to be known a priori and given by $ [0.965, 1]$.  The model parameters are chosen as follows:  
$a_{c1}=4.1^{-1}$, $a_{c2}=(4.87 \times 10^{-7})a_{c1}$, $\xi_1=53$,  $a_{d1} = 0.544$, $a_{d2}=2.65\times 10^{-7}$, $a_{d3}=0.8$,  $a_{d4}=1.85\times 10^{-7}$,  and the parameters for $\phi$ are adopted from \cite{van2009}.

As $x^c$ can be measured  with noise, and the proposed UISEs utilize the current output measurement to compute the  estimate of current states, we choose $\hat{x}^c_0=y^c_0$  for all three estimation schemes for a fair comparison. To design the UISE presented in Theorem~\ref{theo:MHE_full}, the condition in \eqref{eq:Lyapunov} is verified with $\mu=0.48$, a quadratic storage function $V$,  and  scaled square functions $\alpha_1$, $\alpha_2$, $\sigma_v$ and $\sigma_y$.   As for two-stage UISE in Theorem~\ref{theo:stab},   we choose the map $T(x) = \text{col}(x^c, x^{d,1}, a_{d3}a_{c1}^{-1}x^c +x^{d,2})$, $T_\flat(x)=x^c$ and   $\psi(y,u,v)= y^c-v^c$ satisfying conditions in Assumption~\ref{ass}. With these choices, the condition \eqref{eq:p-eioss_quadratic} is verified with $\mu=0.48$.  

For the implementation of  the standard MHE~\cite{schiller2023}, the condition \cite[(4)]{schiller2023}  is verified  with $\mu=0.48$ and $P_1$, $P_2$ restricted to be scaled identity matrices. The prediction horizon is set to  $N=30$ for all estimation  schemes. 
The whole simulation is performed on a PC with Intel i7-12700K and 64GB RAM. We use IPOPT \cite{waechter} and CasADi \cite{andersson} to implement MHE, and LMI solvers from Matlab to verify detectability.

\begin{figure}[ht]
	\centering
	\centering
\begin{subfigure}{0.41\textwidth}
	\vspace{20pt}
	\hspace{7pt}
	\includegraphics[width=0.92\textwidth, clip=false, trim = 16mm 3mm 15mm 12mm]{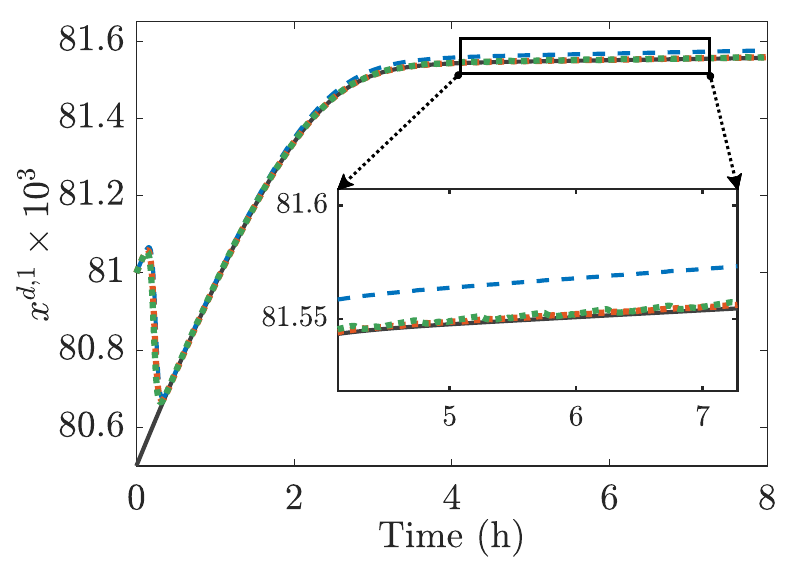}
	\caption{Noiseless case, i.e., $v = 0$  }
\end{subfigure}
\hspace{25pt}
\begin{subfigure}{0.41\textwidth}
	\vspace{20pt}
	\hspace{10pt}
	\includegraphics[width=0.92\textwidth, clip=false, trim = 16mm 3mm 15mm 12mm]{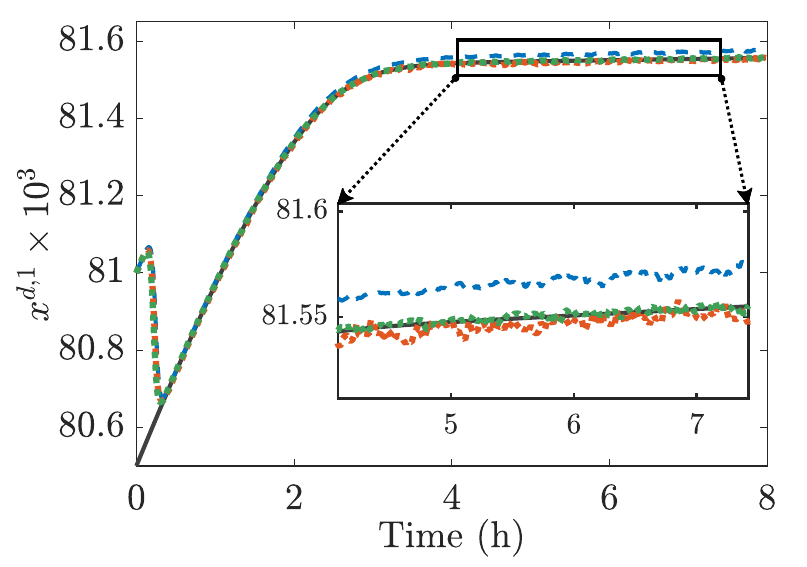}
	\caption{Noisy case, i.e., $v_k \neq0$ }
\end{subfigure}
\vspace{-1pt}
	\caption{True $x^{d,1}$ (black) and  estimated $x^{d,1}$ of UISE in Theorem~\ref{theo:MHE_full} (green), two-stage UISE in Theorem~\ref{theo:stab} (red) and standard MHE from \cite{schiller2023} (blue).}
	\label{fig:traj}
\end{figure}

\begin{figure}
	\centering
	\begin{subfigure}{0.41\textwidth}
		\vspace{20pt}
		\hspace{7pt}
		\includegraphics[width=0.92\textwidth, clip=false, trim = 16mm 3mm 15mm 12mm]{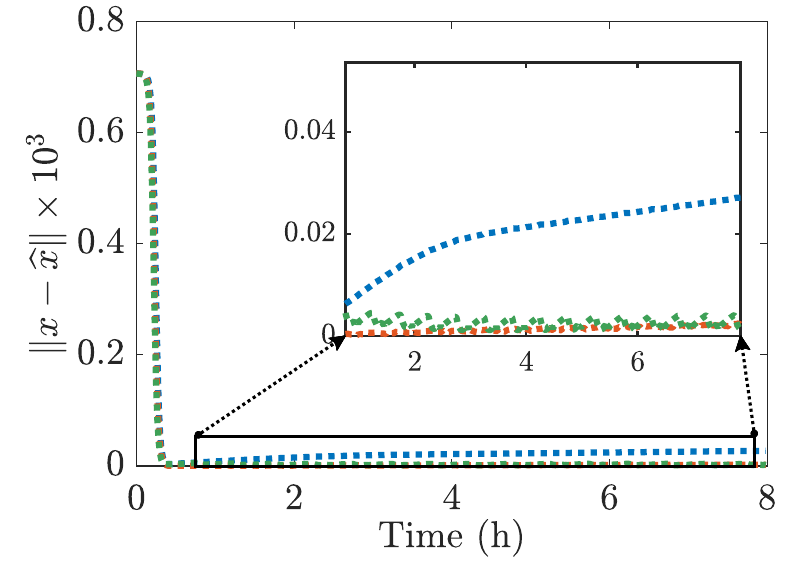}
		\caption{Noiseless case, i.e., $v = 0$  }
	\end{subfigure}
	\hspace{25pt}
	\begin{subfigure}{0.41\textwidth}
		\vspace{20pt}
		\hspace{10pt}
		\includegraphics[width=0.92\textwidth, clip=false, trim = 16mm 3mm 15mm 12mm]{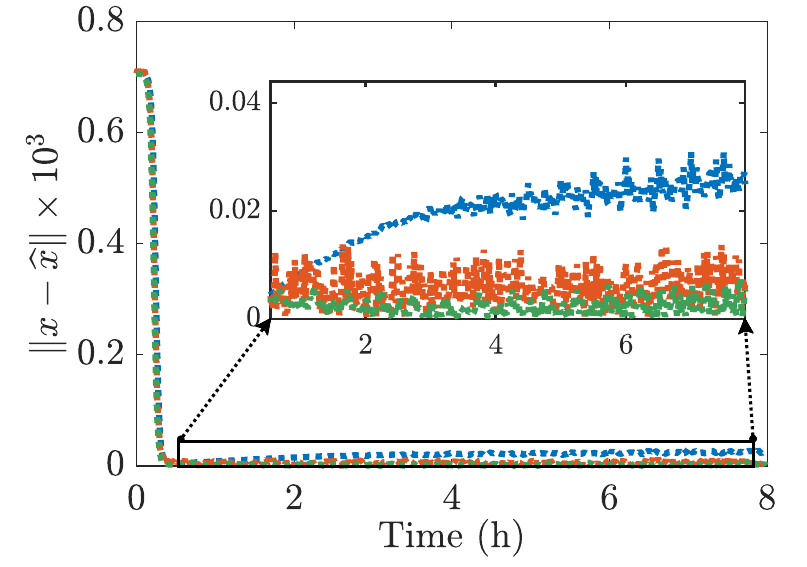}
		\caption{Noisy case, i.e., $v_k \neq0$ }
	\end{subfigure}
	\vspace{-1pt}
	\caption{Estimation errors  of UISE in Theorem~\ref{theo:MHE_full} (green), two-stage UISE in Theorem~\ref{theo:stab} (red) and standard MHE from \cite{schiller2023} (blue) }
	\label{fig:error_noises_all}
\end{figure}

\begin{figure}
	\vspace{20pt}
	\centering
	\includegraphics[width=0.38\textwidth, clip=false, trim = 16mm 3mm 15mm 12mm]{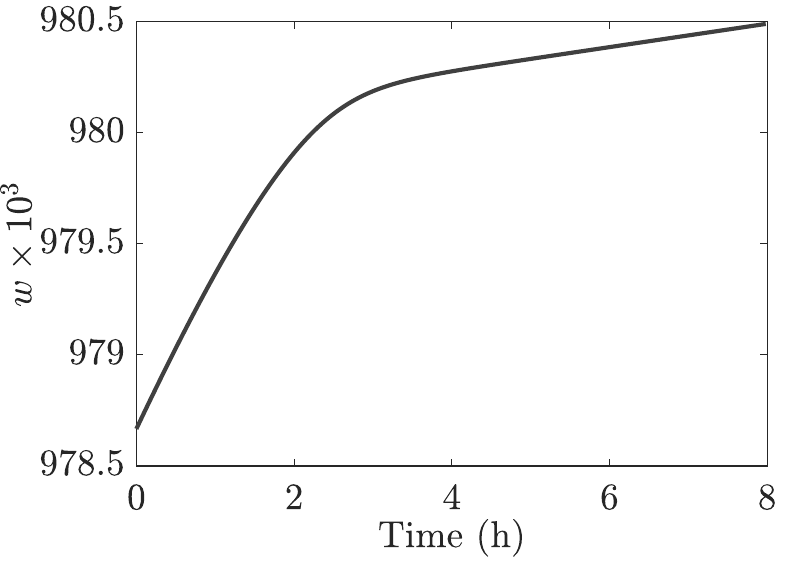}
	\caption{Unknown input $w$. }
	\label{fig:noises}
\end{figure}

 In both noiseless and noisy cases,  the estimation error of $x^{d,1}$ of the standard MHE is larger than that of the proposed UISEs,  as illustrated in Fig.~\ref{fig:traj}.      In particular,   as shown in Fig.~\ref{fig:error_noises_all},    the overall estimation error of the  standard MHE increases as the unknown input $w$ depicted in Fig.~\ref{fig:noises} grows over the time, while the error of the proposed UISE remains below a small level. The better estimation accuracy of the UISEs is attributed  their capability to isolate the unknown inputs from the estimation errors.      Table~\ref{tab:table1} compares  the maximal average computation time $\mathbf{\Bar{t}}_{max}$ for each sample, from 20 simulations,  between the different estimation schemes.   This clearly demonstrates  the  computational efficiency of the two-stage UISE compared with  the other two schemes. The computational efficiency is ascribed to   the smaller number of  optimization variables due to the use of   reduced-order models in the design of  two-stage UISE. 

\begin{table}[h!]\label{Tab:simulationtime}
	\begin{center}
		\caption{ $\mathbf{\Bar{t}}_{max} (\text{ms})$ of  different estimation schemes.  }
		\label{tab:table1}
		\begin{tabular}{c c c c}
			\toprule 
			  \multirow{2}{4em}{noise $v_k$}  &  MHE  &  UISE  &   two-stage UISE \\
			  & \cite{schiller2023} & (Thm.~\ref{theo:MHE_full}) &  (Thm.~\ref{theo:stab}) \\
			\midrule 
			 $v_k\equiv 0$     & 51.9  & 59.4 & 26    \\
			  $v_k\neq 0$     & 52.9 &  73.9  & 26.7 \\    
			\bottomrule 
		\end{tabular}
	\end{center}
\end{table}

\section{Conclusions}\label{sec:conclusion}
To achieve a good estimation accuracy  for nonlinear systems subject to  unknown inputs and bounded noise, we have presented a design framework for unknown input state estimators that ensures convergent estimation error under convergent noise despite possibly unbounded unknown inputs. We introduced the notion of strong nonlinear detectability, which admits an equivalent Lypapunov-like characterization. This allows us to adopt the concept of moving horizon estimation to design an MHE-based unknown input state estimator. Additionally, by properly deriving a reduced-order model,  we can reduce the computational complexity of MHE-based unknown input state estimators. A comparison with  a conventional moving horizon estimator was performed on a plant-growth process to showcase the merits of the proposed approaches. As  future work, we will consider estimating state and unknown inputs jointly. Moreover, designing   non-asymptotic  unknown input state  estimators via moving horizon estimation will be addressed in future research.

\bibliographystyle{IEEEtran}        
\bibliography{references}           



\section*{Appendix}\label{sec:appendix}
\begin{prop} \label{prop:conv_max}
	Consider trajectories  $(\boldsymbol{x},  \boldsymbol{u}, \boldsymbol{w},  \boldsymbol{v}, \boldsymbol{y})$ and  
	$(\widetilde{\boldsymbol{x}},  \boldsymbol{u}, \widetilde{\boldsymbol{w}},  \widetilde{\boldsymbol{v}}, \widetilde{\boldsymbol{y}})$ of the system 
	\begin{subequations}
		\label{eq:sys_extend}
		\begin{align}
			x_{k+1} &= F(x_k, u_k, w_k, v_k) \\
			y_k &= H(x_k, u_k, w_k, v_k)
		\end{align}
	\end{subequations}
	with $x_k \in \mathbb{X}$, $u_k \in \mathbb{U}$, $w_k \in \mathbb{W}$, $v_k \in \mathbb{V}$ and $y_k \in \mathbb{Y}$.  There exist $\gamma_0$,  $\gamma_v, \gamma_y \in \mathcal{KL}$ such that  
	\begin{align*}
		\|\delta x_k\| \leq &\max \Bigl( \gamma_0( \|\delta x_0 \|, k ), \gamma_v(  \| \delta \boldsymbol{v}\|_{[0,k-1]}), \\  
		& \gamma_y(\|\delta \boldsymbol{y}\|_{[0,k-1]}) \Bigr),   
	\end{align*}  
	if and only if there exist  $\overline{\gamma}_0$, $\overline{\gamma}_v$ , $\overline{\gamma}_y \in \mathcal{KL}$ such that 
	\begin{align*}
		\| &\delta x_k \| \leq \max \Bigl(  \max_{j\in \mathbb{I}_{[0, k-1]}} \overline{\gamma}_v(\|\delta v_{j}\|, k-j-1),\\  
		& \overline{\gamma}_0( \|\delta x_0 \|, k ), \max_{j\in \mathbb{I}_{[0, k-1]}} \overline{\gamma}_y(\|\delta y_{j}\|, k-j-1) \Bigr),
	\end{align*}
	 where  $\diamond_i-\widetilde{\diamond}_i$ is denoted by $\delta\diamond_i$ for $\diamond \in \{ x,  v, y, w\}$. 
\end{prop}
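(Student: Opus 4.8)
The plan is to establish the two estimates as equivalent, viewing the second (fading-memory) bound as the finer of the two. The direction \emph{fading-memory $\Rightarrow$ flat} is the routine one. Since each $\overline{\gamma}_v(\cdot,s)$ is class $\mathcal{K}$ in its amplitude and $\overline{\gamma}_v(r,\cdot)$ is nonincreasing, I would bound $\overline{\gamma}_v(\|\delta v_j\|,k-j-1)\le \overline{\gamma}_v(\|\delta v_j\|,0)\le \overline{\gamma}_v(\|\delta\boldsymbol{v}\|_{[0,k-1]},0)$, and likewise for the output term, so that identifying the flat gains as $\gamma_v:=\overline{\gamma}_v(\cdot,0)$, $\gamma_y:=\overline{\gamma}_y(\cdot,0)$ and $\gamma_0:=\overline{\gamma}_0$ reproduces the first display; monotonicity of $\max$ preserves the structure. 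This direction is a one-line $\mathcal{KL}$-calculus computation.

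The substantive direction is \emph{flat $\Rightarrow$ fading-memory}. Its crux is that the flat bound carries no recency information: because $\gamma_v$ is increasing, its input term is just $\max_j\gamma_v(\|\delta v_j\|)$, which does \emph{not} decay with the age $k-j-1$, whereas the target estimate demands exactly such decay. That decay must therefore be manufactured from the structure of \eqref{eq:sys_extend}. Here I would use that \eqref{eq:sys_extend} is causal and time-invariant: applying the flat bound to the time-shifted trajectory pair starting at any $m\in\mathbb{I}_{[0,k]}$ (a valid trajectory pair by time-invariance, with initial state $x_m$) yields the restart, or cocycle, inequality
\begin{equation}
\|\delta x_k\|\le \max\bigl(\gamma_0(\|\delta x_m\|,k-m),\ \gamma_v(\|\delta\boldsymbol{v}\|_{[m,k-1]}),\ \gamma_y(\|\delta\boldsymbol{y}\|_{[m,k-1]})\bigr).
\end{equation}

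From this restart inequality I would build the fading gains by a dyadic decomposition of the horizon. Taking $m=\lceil k/2\rceil$ splits the data into a recent half, which enters directly with small age, and an older half, whose entire influence is funneled through $\|\delta x_{\lceil k/2\rceil}\|$ and hence discounted by $\gamma_0(\cdot,\ge k/2)$ upon propagation to time $k$. Recursing on the older half, a disturbance at time $j$ is isolated after $O(\log_2(k-j))$ halvings, each composing one further $\gamma_0$-propagation. The effective gain on $\|\delta v_j\|$ is then a composition of $\gamma_v$ with iterated $\gamma_0$-decays whose cumulative attenuation is controlled by the age $k-j-1$; this composition defines $\overline{\gamma}_v(r,s)$, and analogously $\overline{\gamma}_y$, while the disturbance-free path composes into $\overline{\gamma}_0$.

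The main obstacle is the $\mathcal{KL}$-bookkeeping in this last step: I must check that the functions assembled from the recursion are genuine class $\mathcal{KL}$ — continuous, class $\mathcal{K}$ in the amplitude, and nonincreasing with limit zero in the age — and that the growing number of composed and maximized terms is dominated by a single $\mathcal{KL}$ gain \emph{uniformly} in $k$. I expect to manage this with the standard devices for such conversions: majorizing the iterated compositions by one $\mathcal{KL}$ function via the nonincreasing-in-age property, and absorbing the $O(\log)$ term count into the amplitude by using $\max$ rather than summation, so that no summability is needed and the three target gains $\overline{\gamma}_0,\overline{\gamma}_v,\overline{\gamma}_y$ emerge as bona fide $\mathcal{KL}$ functions.
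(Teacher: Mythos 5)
Your ``fading-memory $\Rightarrow$ flat'' direction is fine (it matches how the paper itself passes from \eqref{eq:lyapunov_proof_3_3}-type bounds back to sup-norm bounds), and the restart inequality you derive from time-invariance is legitimate. The genuine gap is in the ``flat $\Rightarrow$ fading-memory'' direction, at two distinct places. First, the dyadic split at $m=\lceil k/2\rceil$ does \emph{not} give the recent half ``small age'': the ages $k-j-1$ of disturbances in $[\lceil k/2\rceil, k-1]$ range up to roughly $k/2$, and your scheme bounds their contribution only by the undiscounted $\gamma_v(\|\delta v_j\|)$. Since $\overline{\gamma}_v$ may not depend on $k$, fix any age $a$ and take $k=2a+2$: the disturbance of age $a$ then sits in the recent half, so your construction forces $\overline{\gamma}_v(r,a)\geq \gamma_v(r)$ for \emph{every} $a$, and the assembled gain cannot decay in the age, i.e., is not class $\mathcal{KL}$. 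The decay of a given disturbance's influence must instead be extracted by restarting immediately \emph{after that disturbance}, funneling it through $\gamma_0(\|\delta x_{j+1}\|, k-j-1)$, and this has to be done for all $j$ simultaneously --- not via a single mid-horizon split.

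Second, even on the older half, the claim that the iterated compositions have ``cumulative attenuation controlled by the age'' fails in general: class-$\mathcal{KL}$ functions have no semigroup or cocycle property, so the $O(\log)$-fold compositions $\gamma_0\bigl(\gamma_0(\cdots, k/4), k/2\bigr)$ need not be dominated by any single $\mathcal{KL}$ function uniformly in $k$. For instance, with $\gamma_0(s,t)=s\,e^{s}e^{-t}$ the amplitude part composed $\log_2 a$ times grows like an exponential tower of height $\log_2 a$ and overwhelms the available $e^{-ca}$ discount. This is precisely why the paper does not argue along trajectories at all: its proof defers to Proposition~2.5 of \cite{allan2021}, which embeds the (here, more general) system \eqref{eq:sys_extend} into an autonomous difference inclusion and invokes the established $\mathcal{KL}$ characterization of uniform asymptotic stability --- converse-Lyapunov-type machinery in which a per-step dissipation inequality of the form $V^+\leq \max(\lambda V, \sigma(\cdot))$ iterates with genuine geometric discounting, supplying exactly the uniformity your bookkeeping cannot. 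As written, your construction fails before the $\mathcal{KL}$-bookkeeping you flagged as the ``main obstacle'' even begins; repairing it would essentially amount to reproving that difference-inclusion machinery.
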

\begin{proof}
	The proof follows a similar reasoning as in the proof of Proposition~2.5 in \cite{allan2021} and requires modifications of  the autonomous difference inclusion related to the original system in \cite{allan2021} for the more general system~\eqref{eq:sys_extend} to show the ``only if" statement. 
\end{proof}

\begin{prop}[Proposition~7 in \cite{SONTAG1998}] \label{prop:KL_lemma}
	Assume that $\beta \in \mathcal{KL}$. Then, there exist $\alpha_1$, $\alpha_2 \in \mathcal{K}_{\infty}$ so that 
	$$ \beta(r,k) \leq \alpha_1\big( \alpha_2(r) e^{-k} \big), \ \forall r, k \geq 0.   $$ 
\end{prop}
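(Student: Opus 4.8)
The plan is to reprove this as the classical Sontag $\mathcal{KL}$-estimate rather than merely invoke it. Since the paper's $\mathcal{KL}$ functions carry a discrete second argument, I would first pass to continuous time: extend $\beta$ to $\bar\beta(r,t)$ by linear interpolation in $t$ between consecutive integers. The interpolant is continuous, is still of class $\mathcal{KL}$ on $[0,\infty)\times[0,\infty)$ (for fixed $t$ it is a convex combination of two $\mathcal{K}$-functions of $r$, and for fixed $r$ it is a piecewise-linear interpolation of a non-increasing null sequence), and it agrees with $\beta$ at integer $t$. Hence a continuous-time estimate $\bar\beta(r,t)\le\alpha_1(\alpha_2(r)e^{-t})$, evaluated at $t=k$, yields exactly the claim, and it suffices to treat continuous $t$.

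Next I would reduce the two-variable estimate to a one-variable domination by rescaling time. First I replace $\bar\beta$ by a pointwise-larger $\hat\beta\in\mathcal{KL}$ that is, for each fixed $r>0$, continuous and strictly decreasing to $0$ in $t$ (e.g. by adding a small term such as $r e^{-t}$ and then majorizing); this only enlarges the bound but makes $\hat\beta$ invertible in $t$. Then I substitute $\tau:=e^{-t}\in(0,1]$ and set $\gamma(r,\tau):=\hat\beta(r,-\ln\tau)$, extended by $\gamma(r,0):=0$. The function $\gamma$ is nondecreasing in each argument, vanishes when either argument is $0$, and satisfies $\gamma(r,\tau)\to0$ as $\tau\to0^+$; the goal becomes the product-type domination $\gamma(r,\tau)\le\alpha_1(\alpha_2(r)\,\tau)$, since undoing the substitution gives $\beta(r,t)\le\hat\beta(r,t)=\gamma(r,e^{-t})\le\alpha_1(\alpha_2(r)e^{-t})$.

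For the domination I would construct $\alpha_2\in\mathcal{K}_\infty$ first and then define $\alpha_1$ by
\begin{equation*}
\alpha_1(s):=\sup\bigl\{\gamma(r,\tau): r\ge0,\ \tau\in[0,1],\ \alpha_2(r)\,\tau\le s\bigr\}.
\end{equation*}
With this definition the inequality $\gamma(r,\tau)\le\alpha_1(\alpha_2(r)\tau)$ holds automatically, so everything rests on choosing $\alpha_2$ large enough that $\alpha_1$ is finite for every $s$ and satisfies $\alpha_1(s)\to0$ as $s\to0^+$ (after which $\alpha_1$ is majorized by a genuine $\mathcal{K}_\infty$ function). Using monotonicity in $r$, I would rewrite the supremum as $\sup_{\tau\in(0,1]}\gamma\bigl(\alpha_2^{-1}(s/\tau),\tau\bigr)$, so that along $t=-\ln\tau\to\infty$ the first argument $\alpha_2^{-1}(s/\tau)$ must grow slowly enough that the decay of $\hat\beta$ in $t$ outpaces its growth in $r$.

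This last point is the main obstacle: because a $\mathcal{KL}$ function may decay in $t$ arbitrarily slowly, no fixed growth rate for $\alpha_2$ works, and $\alpha_2$ must be tailored adaptively to $\beta$. I would carry this out by a diagonal construction over a countable partition of the $s$-axis: on each level $s\in[1/(n+1),1/n]$ prescribe the growth of $\alpha_2$ (equivalently, the slowness of $\alpha_2^{-1}$) large enough that the relevant supremum stays below, say, $1/n$, using $\lim_{t\to\infty}\hat\beta(r,t)=0$ together with compactness of the competing $r$-range to guarantee each piece is finite; then interpolate to obtain a single continuous, strictly increasing, unbounded $\alpha_2$. Verifying continuity of $\alpha_1$ at $0$ and unboundedness of both functions (so that they are genuinely class $\mathcal{K}_\infty$) would then be routine monotone-limit checks.
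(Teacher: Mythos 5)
The paper contains no proof of this proposition: it is imported verbatim as Proposition~7 of the cited Sontag reference, so your attempt must be measured against the classical literature proof rather than anything in the text. Your route is, in essence, that standard argument: majorize $\beta$ by some $\hat\beta\in\mathcal{KL}$ strictly decreasing in its second argument, substitute $\tau=e^{-t}$, define $\alpha_1$ as the optimal sup-envelope once $\alpha_2$ is fixed, and build $\alpha_2$ adaptively via a countable construction exploiting only the pointwise decay $\hat\beta(r,t)\to 0$. Two remarks. First, the interpolation step is superfluous overhead: the statement is needed (and stated) only for integer $k$, and nothing in your subsequent construction uses continuity in $t$; it is harmless, though. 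Second, and this is the one genuine imprecision: your diagonalization is prescribed on the wrong axis. You propose to fix the growth of $\alpha_2$ ``on each level $s\in[1/(n+1),1/n]$'', but $\alpha_2$ is a function of $r$, and for a fixed small $s$ the supremum $\alpha_1(s)=\sup_{t\ge 0}\hat\beta\bigl(\alpha_2^{-1}(s e^{t}),t\bigr)$ probes $\alpha_2^{-1}$ on all of $[s,\infty)$, so $\alpha_2$ cannot be tuned level-by-level in $s$; likewise, the ``compactness of the competing $r$-range'' you invoke fails, since for fixed $s>0$ the admissible $r$ is unbounded.

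The repair stays entirely within your plan and should be stated on the $r$-axis: pick times $T_m\uparrow\infty$ with $\hat\beta(m+1,T_m)\le 1/m$, and choose $\rho:=\alpha_2^{-1}\in\mathcal{K}_{\infty}$ so slowly growing that $\rho(u)\le m+1$ for all $u\le e^{2T_{m+1}}$ (e.g.\ piecewise linear through $\rho\bigl(e^{2T_{m+1}}\bigr)=m+1$). Then $g(u):=\hat\beta\bigl(\rho(u),\tfrac12\ln u\bigr)\le 1/m$ on $[e^{2T_m},e^{2T_{m+1}}]$, so $g(u)\to 0$ as $u\to\infty$; writing $u=se^{t}$ and splitting the supremum at $u=\sqrt{s}$, $u=1$ and a large threshold $U$ (using $\ln(u/s)\ge\tfrac12\ln u$ for $u\ge 1$, $s\le 1$) yields both finiteness of $\alpha_1(s)$ for every $s$ and $\alpha_1(s)\to 0$ as $s\to 0^{+}$. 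With that correction your argument closes, and majorizing the resulting nondecreasing $\alpha_1$ by a genuine $\mathcal{K}_{\infty}$ function is, as you say, a routine monotonicity check.
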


\begin{pf*}[Proof of Theorem~\ref{theo:Lyapunov}:] 
	To ease  notation, we abbreviate $\diamond_i-\widetilde{\diamond}_i$ by $\delta\diamond_i$ for $\diamond \in \{ x,  v, y, v^e, y^e\}$ in the proof. 
	
\textbf{Sufficiency}: Let us consider  trajectories  $(\boldsymbol{x},  \boldsymbol{u}, \boldsymbol{w},  \boldsymbol{v}, \boldsymbol{y})$, 
$(\widetilde{\boldsymbol{x}},  \boldsymbol{u}, \widetilde{\boldsymbol{w}},  \widetilde{\boldsymbol{v}}, \widetilde{\boldsymbol{y}})$ $ \in $ $  \mathbb{Z}^\infty$ satisfying  \eqref{eq:sys}.  
Successive application of  \eqref{eq:Lyapunov_2} leads to 
\begin{align*}
	V(x_k, \widetilde{x}_k) -   \mu^k V(x_0, \widetilde{x}_0) &\leq    
   \sum^{k-1}_{i=0} \mu^{k-i-1} \big( \sigma_v(\|\delta v_{i+1}\|) + \sigma_y(\|\delta  y_{i+1}\|) +  \sigma_y(\|\delta y_{i}\|)\big)\\
   & +\sum^{k-1}_{i=0} \mu^{k-i-1} \sigma_v(\|\delta v_i\|).
\end{align*}
Since $\mu\in (0,1)$, we further obtain
\begin{equation}
\label{eq:lyapunov_proof_1}
\begin{aligned}
	 V(x_k, \widetilde{x}_k) -   \mu^k V(x_0, \widetilde{x}_0) & \leq  \sum^{k-1}_{i=0} \mu^{k-i-1} \sigma_v(\|\delta v_i\|)   
   + \sum^{k-1}_{i=0} \mu^{k-i-1}  \sigma_y(\|\delta y_i\|)\\
   &+ \sum^{k-1}_{i=0} \mu^{k-i-2}\big( \sigma_v(\| \delta v_{i+1}\|) +  \sigma_y(\| \delta y_{i+1}\|)  \big)\\
&\leq  \mu^{-1}\alpha_v(\|\delta v_k\|) +  \mu^{-1}\alpha_y(\|\delta y_k\|) 
 +\sum^{k-1}_{i=0}2\mu^{k-i-1}\big( \sigma_v(\|\delta v_i\|) + \sigma_y(\|\delta y_i\|) \big)\\
&\leq  \sum^{k}_{i=0}2\mu^{k-i-1}\big( \sigma_v(\|\delta v_i\|) + \sigma_y(\|\delta y_i\|) \big).
\end{aligned}
\end{equation}
By combining the lower and upper bounds of $V(x, \widetilde{x})$ in \eqref{eq:Lyapunov_1} with \eqref{eq:lyapunov_proof_1} and then choosing $\alpha = \alpha_1$, $\alpha_0(r,i)=\mu^i\alpha_2(r)$, $\alpha_v(r,i)=2\mu^{k-i-1}\sigma_v(r)$ and  $\alpha_y(r,i)=2\mu^{k-i-1}\sigma_y(r)$, we arrive at \eqref{eq:strong_detect}. Additionally, both $\alpha_v$ and $\alpha_y$ are summable $\mathcal{KL}$-functions by  the formula of infinite geometric series.    

\textbf{Necessity}: Consider the system \eqref{eq:sys_1} with the auxiliary output $y_{k+1}=h(f(x_k, u_k, w_k, v_k), v_{k+1})$,  and define $y^e_k:=\text{col}(y_k, y_{k+1})$,  $v^e_k:=\text{col}(v_k, v_{k+1})$,  $\widetilde{y}^e_k:=\text{col}(\widetilde{y}_k, \widetilde{y}_{k+1})$ and $\widetilde{v}^e_k:=\text{col}(\widetilde{v}_k, \widetilde{v}_{k+1})$, then \eqref{eq:strong_detect} implies that the the system with the auxiliary output satisfies
\begin{equation}
\label{eq:lypaunov_proof_3_1}
\begin{aligned}
	\alpha(\|\delta x_k\|) & \leq  \alpha_0( \|\delta x_0 \|, k ) +   \alpha_v (\|\delta v_{0}\| , k ) + \alpha_y (\|\delta y_{0}\| , k ) \\
& + \sum_{i=0}^{k-1} \bigl(  \alpha_v (\|\delta v^e_{i}\| , k-i-1 ) + \alpha_y (\|\delta y^e_{i}\| , k-i-1 ) \bigr) \\
& \leq \alpha_0( \|\delta x_0 \|, k )+\alpha_v (\|\delta v^e_{0}\| , k-1 ) +  \alpha_y (\|\delta y^e_{0}\| , k-1 ) \\
&  + \sum_{i=0}^{k-1} \bigl(  \alpha_v (\|\delta v^e_{i}\| , k-i-1 ) + \alpha_y (\|\delta y^e_{i}\| , k-i-1 ) \bigr)  \\
& \leq \sum_{i=0}^{k-1} \big(  2 \alpha_v (\| \delta v^e_{i}\| , k-i-1 ) + 2 \alpha_y (\|\delta y^e_{i}\| , k-i-1 ) \big)
+\alpha_0( \|\delta x_0 \|, k ) . 
\end{aligned}
\end{equation}
Recall that $\alpha_v$ and $\alpha_y$ are summable $\mathcal{KL}$-functions, this together with \eqref{eq:lypaunov_proof_3_1} implies that  there exist $\overline{\alpha}_v, \overline{\alpha}_y \in \mathcal{K}$, such that 
\begin{align*}
	\alpha(\|\delta x_k\|) \leq &  \alpha_0( \|\delta x_0 \|, k ) + \overline{\alpha}_v( \|\delta \boldsymbol{v}^e\|_{[0,k-1]}) 
	+ \overline{\alpha}_y(\|\delta \boldsymbol{y}^e\|_{[0,k-1]}).   
\end{align*}
As a result,  we have
\begin{equation}
	\label{eq:lypaunov_proof_3_2}
	\begin{aligned}
		\|\delta x_k\| \leq &\max \Bigl( \gamma_0( \|\delta x_0 \|, k ), \gamma_v( \delta \|\boldsymbol{v}^e\|_{[0,k-1]}), 
		 \gamma_y(\|\delta \boldsymbol{y}^e\|_{[0,k-1]}) \Bigr),   
	\end{aligned}
\end{equation}
with $\gamma_0:= \alpha^{-1}\circ 3 \alpha_0 \in \mathcal{KL}$, $\gamma_v:= \alpha^{-1} \circ 3\overline{\alpha}_v \in \mathcal{K}$, $ \gamma_y:=\alpha^{-1} \circ 3\overline{\alpha}_y \in \mathcal{K}$. By invoking Proposition~\ref{prop:conv_max},    
\eqref{eq:lypaunov_proof_3_2} can be reformulated into
\begin{equation}
	\label{eq:lyapunov_proof_3_3}
	\begin{aligned}
		\| &\delta x_k \| \leq \max \Bigl(  \max_{j\in \mathbb{I}_{[0, k-1]}} \overline{\gamma}_v(\|\delta v^e_{j}\|, k-j-1),  
		 \overline{\gamma}_0( \|\delta x_0 \|, k ), \max_{j\in \mathbb{I}_{[0, k-1]}} \overline{\gamma}_y(\|\delta y^e_{j}\|, k-j-1) \Bigr),
	\end{aligned}
\end{equation} 
with some $\overline{\gamma}_0,  \overline{\gamma}_v, \overline{\gamma}_y \in  \mathcal{KL}$. 
By applying Proposition~\ref{prop:KL_lemma} to \eqref{eq:lyapunov_proof_3_3} along with $\max(a,b)\leq a+b$ for $a, b \geq 0$, and then denoting  $\lambda:=e^{-1}$,  we can find $\beta, \beta_0, \beta_v, \beta_y \in \mathcal{K}^\infty$ such that
\begin{equation}
	\label{eq:lyapunov_proof_3}
\begin{aligned}
\beta(\|\delta x_k\| ) & \leq   \sum^{k-1}_{i=0}\lambda^{k-i-1} \bigl( \beta_v(\|\delta v_{i+1}\|) +\beta_y(\|\delta y_{i+1}\|) \bigr)  \\
& +  \sum^{k-1}_{i=0} \lambda^{k-i-1} \big(\beta_v(\|\delta v_i\|)+  \beta_y(\|\delta y_{i}\|) \big) +  \lambda^k\beta_0(\|\delta x_0\|). 
\end{aligned}
\end{equation} 
Let us    abbreviate two solutions  $(\boldsymbol{x},  \boldsymbol{u}, \boldsymbol{w},  \boldsymbol{v}, \boldsymbol{y})$ and $	(\widetilde{\boldsymbol{x}},  \boldsymbol{u}, \widetilde{\boldsymbol{w}},  \widetilde{\boldsymbol{v}}, \widetilde{\boldsymbol{y}})$ of \eqref{eq:sys} by $ \boldsymbol{\Xi}$ and $\widetilde{\boldsymbol{\Xi}}$ respectively.     For a given  $(x, \widetilde{x}) \in \mathbb{X} \times \mathbb{X}$, let us define 
\begin{align*}
 V(x, \widetilde{x}) &:= \sup_{ \substack{   \bar{k}\in \mathbb{I}_{\geq 0}, \ \boldsymbol{\Xi}, \widetilde{\boldsymbol{\Xi}}  \in \mathbb{Z}^{\infty}, \\  
	\ x_0 = x, \ \widetilde{x}_0=\widetilde{x} } }  \lambda^{-\bar{k}/2} \Bigl( \beta(\|\delta x_{\bar{k}}\|) 
  - \sum^{\bar{k}}_{i=1}\lambda^{\bar{k}-i} \bigl(  \beta_v(\| \delta v_i\|) + \beta_y(\|\delta y_{i}\|) \bigr) \\
&- \sum^{\bar{k}-1}_{i=0}\lambda^{\bar{k}-i-1} \bigl( \beta_v(\|\delta v_i\|) + \beta_y(\|\delta y_{i}\|) \bigr)  \Bigr).
\end{align*}
By leveraging this definition together with \eqref{eq:lyapunov_proof_3} and   following a similar reasoning as in the proof of \cite[Theorem~3.2]{allan2021}, we obtain 
\begin{align*}
	\beta(\|x-\widetilde{x}\|) \leq V(x, \widetilde{x}) \leq \beta_0(\|x-\widetilde{x}\|)
\end{align*}
and
\begin{align*}
	V(x^+, \widetilde{x}^+) & \leq \sqrt{\lambda} V(x, \widetilde{x}) + \beta_v(\|v-\widetilde{v}\|) +\beta_v(\| v^+ - \widetilde{v}^+\|)
 + \beta_y(\|y-\widetilde{y}\|) + \beta_y(\|y^+-\widetilde{y}^+\|), 
\end{align*}
with $y, y^+, v, v^+$  specified in Theorem~\ref{theo:Lyapunov} and  $x^+:=f(x,u,w)$ as well as $\widetilde{x}^+:=f(\widetilde{x}, u, \widetilde{w})$. 
\end{pf*}

\begin{prop} \label{prop:weak_triangular}
	For any function $\beta$ of class $\mathcal{GK}$, any function $\alpha$ of class $\mathcal{K}_{\infty}$ such that $\alpha-\text{Id}$ is of class $\mathcal{K}_{\infty}$, and any  $a, b \in \mathbb{R}_{\geq 0}$, we have
	\begin{equation}
		\label{eq:weak_triangular}
		\beta(a+b) \leq \beta(\alpha(a)) + \beta(\alpha\circ (\alpha-\text{Id})^{-1}(b)).
	\end{equation}
\end{prop}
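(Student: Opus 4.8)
The plan is to reproduce the classical weak triangular inequality argument (as in \cite{jiang1994}), the only new ingredient being the verification that a $\mathcal{GK}$-function is monotone non-decreasing, so that the usual case distinction still applies even though $\beta$ need not be strictly increasing or injective.

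First I would record the monotonicity property. Directly from the definition of class $\mathcal{GK}$, for any $r_1 > r_2 \geq 0$ one has either $\beta(r_1) > \beta(r_2)$ (in the case $\beta(r_1) > 0$) or $\beta(r_1) = \beta(r_2)$ (in the case $\beta(r_1) = 0$); in both cases $\beta(r_1) \geq \beta(r_2)$. Hence $\beta$ is non-decreasing on $[0,\infty)$. This is the structural fact that compensates for the possible lack of strict monotonicity and injectivity of $\beta$ and lets the standard proof go through verbatim.

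Next, since $\alpha - \text{Id} \in \mathcal{K}_\infty$, it is a continuous strictly increasing bijection of $[0,\infty)$, so its inverse is well defined and I may set $c := (\alpha - \text{Id})^{-1}(b)$, equivalently $\alpha(c) = c + b$. I would then establish the pointwise bound $a + b \leq \max(\alpha(a), \alpha(c))$ by a case distinction. If $a \geq c$, then monotonicity of $\alpha - \text{Id}$ gives $\alpha(a) - a \geq \alpha(c) - c = b$, so $a + b \leq \alpha(a)$. If instead $a < c$, then $a + b < c + b = \alpha(c)$. In either case the claimed maximum bound holds.

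Finally I would apply $\beta$ to this inequality. Because $\beta$ is non-decreasing, $\beta(a+b) \leq \beta(\max(\alpha(a), \alpha(c))) = \max(\beta(\alpha(a)), \beta(\alpha(c)))$, and bounding the maximum of two non-negative quantities by their sum yields $\beta(a+b) \leq \beta(\alpha(a)) + \beta(\alpha \circ (\alpha - \text{Id})^{-1}(b))$, which is exactly \eqref{eq:weak_triangular}. There is no genuine obstacle here; the only point requiring care, and the reason the statement is nontrivial relative to the standard $\mathcal{K}$-version, is that $\beta$ may vanish on an initial interval and be non-injective, so in passing $\beta$ through the maximum one must argue solely from non-decreasingness rather than from strict monotonicity.
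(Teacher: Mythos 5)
Your proof is correct and takes essentially the same route as the paper's: the same case distinction on whether $a$ exceeds $(\alpha-\text{Id})^{-1}(b)$ (equivalently, whether $(\alpha-\text{Id})(a)$ exceeds $b$), relying only on the non-decreasingness of $\mathcal{GK}$ functions to pass $\beta$ through the resulting bound. Your explicit verification of that monotonicity and the intermediate $\max$-then-sum step are only minor presentational differences from the paper, which bounds $\beta(a+b)$ by a single summand in each case.
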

\begin{proof}
	Choose any function $\alpha \in \mathcal{K}_{\infty}$ such that $\alpha-\text{Id} \in \mathcal{K}_{\infty}$. For any $a, b \in \mathbb{R}_{\geq 0}$, if
	$ (\alpha-\text{Id})(a) \leq b$, then $(\alpha-\text{Id})^{-1}(b) \geq a$. Hence,  we have
	$$ b+(\alpha-\text{Id})^{-1}(b) = \alpha\circ(\alpha-\text{Id})^{-1} (b) \geq a+b $$
	As any function in class $\mathcal{GK}$ is non-decreasing, the above inequality implies that  $\beta(a+b) \leq \beta(\alpha\circ (\alpha-\text{Id})^{-1}(b))$, leading to \eqref{eq:weak_triangular}. 
	Similarly, if  $ (\alpha-\text{Id})(a) > b$, then
	$$ a+b <  a+(\alpha-\text{Id})(a)=\alpha(a).  $$ 
	As $\beta$ is  non-decreasing, we have $\beta(a+b) \leq \beta(\alpha(a))$, and hence showing \eqref{eq:weak_triangular}. 
\end{proof}
\end{document}